\documentclass[12pt,a4paper]{article}

\usepackage[left=1in,right=1in,top=1in,bottom=1in]{geometry}
\usepackage[table]{xcolor}

\linespread{1.5}
\usepackage{mathtools}
\usepackage[table]{xcolor}
\usepackage{multirow}

\usepackage{amsmath,amsthm,amssymb,amsfonts,setspace}
\usepackage{bbm}
\usepackage{graphicx}
\usepackage{rotating}
\usepackage{enumitem}
\usepackage{eurosym}
\usepackage{epstopdf}
\usepackage{csquotes}
\usepackage[natbibapa,nodoi]{apacite}
\usepackage{array}
\usepackage{caption}
\usepackage{hyperref}
\usepackage{collref}
\usepackage{grffile}
\usepackage{subcaption}
\usepackage{xcolor}
\usepackage{mathrsfs}  
\usepackage{verbatim}
\usepackage{bbm}
\usepackage{tikz}
\usetikzlibrary{decorations.pathreplacing,calligraphy,decorations.markings}
\usepackage{calc}
\usepackage{array}
\usepackage{pgfplots}
\usepackage{multirow}
\usepgfplotslibrary{fillbetween}
 
\DeclareMathOperator*{\argmin}{arg\,min}

%\mathtoolsset{showonlyrefs=true}

\usepackage{lineno}
%\linenumbers

\newtheorem{prp}{Proposition}
\newtheorem{theorem}{Theorem}

\newenvironment{example}{
    \begin{trivlist}
    \item[\hskip \labelsep{\bfseries Example:}]}{\end{trivlist}}

\newcommand{\CP}{\mathcal{C}\mathcal{P}}
\newcommand{\CV}{\mathcal{C}\mathcal{V}}

 % set of agents
 %interval of alternatives

 % alternative
\newcommand{\cdf}[1][]
{
	\ifthenelse{\isempty{#1}}{
	\ensuremath{F\xspace}}{\ensuremath{F(#1)\xspace}}
} % SCF

\newcommand{\pdf}[1][]{
	\ifthenelse{\isempty{#1}}{
	\ensuremath{f\xspace}}{\ensuremath{f(#1)\xspace}}
} % pdf

 % median
 % mirror

 % status quo
 % proposal
 % agenda setter
 % winner

 % round
 % distance
 % votes

% Added by MA:
\usepackage{cleveref}

\newtheorem{lemma}{Lemma}

\newtheorem{proposition}{Proposition}

\begin{document}
%Polarized Democracies
\title{
  Propose or Vote: A Canonical Democratic Procedure\footnote{We thank Pio Blieske, Barton Lee, Giorgis Giovanni, Lukas Kolleck and Cesar Martinelli and participants at the 8th ETH/CEPR Workshop on Democracy for valuable comments and suggestions.} 
%\title{Costs of Change and Ability (When being inefficient is good)\vspace{-1ex}
}

%\linenumbers

\author{\normalsize Hans Gersbach\\
	\small KOF Swiss Economic Institute \\[-1.6mm]
	\small ETH Zurich and CEPR \\[-1.6mm]
	\small Leonhardstrasse 21 \\[-1.6mm]
	\small 8092 Zurich, Switzerland \\[-1.6mm]
	\small hgersbach@ethz.ch\\
}

\date{%\vspace{0.5cm}
	This version: Januar 2026
}

\maketitle
\begin{center}
\end{center}

\vspace*{-1.7cm}

\vspace{-1cm}
\begin{abstract}

\noindent

%\linenumbers

\singlespacing

This paper introduces Propose or Vote (PoV), a democratic procedure for collective decision-making and elections that does not rely on a central mechanism designer. In the first stage, members of a polity choose whether to become proposal-makers or to participate only as voters. In the second stage, voters decide by majority voting over the set of submitted proposals. With appropriately chosen default points, PoV implements the Condorcet winner in a single round of voting whenever one exists. We show that this implementation is globally unique when the number of members is odd; for an even number of members, uniqueness can be restored by adding an artificial agent. PoV can also be applied to elections, where agents decide whether to stand as candidates or vote over the resulting candidate set.

\medskip

 \noindent {{\bf Keywords:}} proposal-making, democracy, majority voting; 

\medskip

 \noindent {{\bf JEL Classification:}} C72, D70, D72.

\medskip

\end{abstract}

\newpage

\clearpage
\pagenumbering{arabic} 

%\linenumbers
%\addtolength{\baselineskip}{.15\baselineskip}

\section{Introduction}\label{sec:introduction}

In various collective decision-making settings — such as committee or polity deliberations and legislative processes — participants must select a single alternative from a vast array of options. A common approach in these contexts is to employ successive rounds of pairwise majority voting. For example, legislative bodies refine bills through sequential votes on amendments, while negotiating parties in multi-agent resource allocation problems often reach consensus through step-by-step adjustments. However, when the number of alternatives is large, it becomes impractical to evaluate all options exhaustively. Furthermore, the final outcome is highly contingent on both the subset of alternatives selected for voting and the order in which they are considered. This creates a well-documented vulnerability: strategic agenda setters can manipulate the decision-making process to steer outcomes in their favor, a phenomenon extensively studied in the 
literature.\footnote{For foundational contributions, see \cite{McKelvey1979}, \cite{Rubinstein1979}, \cite{Bell1981}, and \cite{Schofield1983}. More recent research has explored the scope and limitations of agenda power in various contexts, including \cite{Barbera2017}, \cite{Nakamura2020}, and \cite{Nageeb2023}, \cite{Horan2021} and \cite{Rosenthal2022}, while \cite{Banks2002} provides a review of earlier work.}

In this paper, we introduce a procedure to address this challenge. 
In the initial stage, all members of the polity choose whether to participate as proposal-makers or to reserve their role for voting on proposals in the subsequent stage. This process is called Propose or Vote (PoV). 

The PoV procedure governs proposal-making through the following set of rules: If no agent chooses to make a proposal, an alternative is randomly selected as the society’s final choice.\footnote{An alternative would be to introduce a status quo that is implemented in this case. However, we do not pursue this here.}
If a single agent opts to propose, his/her proposal is automatically implemented. If two agents choose to propose, their proposals are put to a vote among the remaining members of society. If more than two agents opt for proposal-making, two proposals from the group are randomly selected to be put to a vote. If everyone opts for making a proposal, again an alternative is randomly selected as the society’s final choice. 

We examine whether this idea works in the simplest context with a one-dimensional set of alternatives and a society of agents with symmetric, single-peaked preferences and a publicly known distribution. We analyze the strategic incentives created by the PoV procedure for agents, and examine whether it works in selecting the socially optimal Condorcet winner in one round of voting.

We show that with an odd number of agents, there exists an equilibrium in which precisely the median agent applies for proposal-making, proposes his or her peak, and the peak is implemented. When the number of agents is even, both the left median and the right median opt for proposal-making, propose their peaks, and one of them will be selected by the remaining agents.  

We then explore the uniqueness of the result. With an even number of agents, we show with counterexamples that there are other equilibria in which more extreme proposals are implemented. With an odd number of agents, we show uniqueness. One can always ensure that there is an odd number of agents by adding an artificial voter (AI voter) when necessary.  
Moreover, we show that the procedure can be used in the same way for elections. Agents can decide to propose themselves (or somebody else) for an office, and the agents who have not made a proposal elect one candidate from the set of candidates.     

The paper is organized as follows. In the next section, we relate our work to the literature. In Section \ref{model}, we introduce the model, and in Section \ref{Simple-Result}, we present the equilibria in which the Condorcet winner is implemented. In Section \ref{Ohter-Equilibria}, we look at other equilibria. In Section \ref{GlobalUniqueness}, we examine how to achieve global uniqueness. In Section \ref{elections}, we explore the application to elections. Section 
\ref{section:conclusion} concludes. Longer proofs are relegated to the Appendix. 

\section{Relation to Literature}

Following the seminal contribution by \cite{Moulin1980}, it is well-established that with single-peaked preferences, the Condorcet winner can be implemented by asking agents to report their peak, introducing a set of fixed peaks, and applying a generalized median rule. This foundational insight has been significantly generalized by \cite{Border1983}, \cite{Barbera1994}, and \cite{Klaus2020}. In contrast to their mechanism design perspective, our paper addresses a different implementation challenge: the absence of a centralized mechanism designer.  In many institutional procedures for committee or parliamentary decisions, procedures rely on agent-driven proposal-making and voting, as a centralized mechanism designer is absent. In this paper, we explore how a PoV procedure can solve this problem.\footnote{As we focus on a one-dimensional policy space, our work is related to the spatial bargaining literature, including \cite{Baron1991}, \cite{Banks2000}, \cite{Kalandrakis2016}, and \cite{Zapal2016}, and a single payoff-relevant outcome. \cite{gersbach2025votingrandomproposersrounds} demonstrated that the random selection of agenda-setters in an iterative majority voting framework may provide them with incentives to propose the Condorcet winner.}

Our paper is also related to the algorithmically driven and iterative voting procedures with random elements in various policy spaces, as developed by \cite{Airiau2009}, \cite{Goel2012}, and \cite{Garg2019}, among others. We contribute to this literature by analyzing a simple procedure for implementing the Condorcet winner. 

We are not aware of any real-world procedure that features the characteristics of PoV. Yet, there are parallels. In some modern democratic parliaments, there are members who have the right to introduce
legislation or motions but do not have voting rights. This arrangement exists for various constitutional, procedural, or representational reasons. Examples include the Speaker of the House of Commons in the UK\footnote{The Speaker may propose procedural motions but does not vote unless there is a tie, see Erskine May's Parliamentary Practice; House of Commons Factsheet M2.} or Non-voting Delegates in the United States Congress\footnote{Delegates from U.S. territories (e.g., Puerto Rico, Guam) can introduce bills and vote in
committees but cannot vote on the final passage of legislation on the House floor (see U.S. House Rules, Rule III §3)}. Parliamentary Secretaries in Canada can introduce government business but cannot vote in committees\footnote{ House of Commons Procedure and Practice, 3rd ed.; Parliament of Canada Act, Sections 50-51.}.

In some sense, the Speaker of the House of Commons in the UK acts like the AI voter in our context, as he/she only votes if there is a tie. This example illustrates that separating proposal rights from voting power is used in practice. 
Finally, the procedure may be sustained by social norms. For instance, if someone proposes themselves as a candidate for office, one often observes that this person does not participate in voting on the set of candidates put forward.

\section{The Model}
\label{model}
Consider a society that consists of $N$ agents. They have to select an alternative $x$ from a one-dimensional policy space, which we take as an interval of arbitrarily chosen but fixed size $[-A,A]$ for some $A \in (0,\infty)$.

To do so, we introduce the following procedure, called Propose or Vote (PoV), which is structured into two stages.

\begin{enumerate}
    \item Each agent decides whether he will propose a policy $x_p \in [-A,A]$ or be placed into the pool of voters. The set of proposers will be denoted by $\CP$ and the set of voters by $\CV$.

    \item Among all those proposals, two are selected uniformly at random and put to a vote. Only the agents who opted to be placed in the voter pool $\CV$ in the first stage are allowed to participate in the voting procedure. The winner of the vote according to the majority rule is implemented.
\end{enumerate}

% Types currently pariwise different

Each of the society's agents $i \in [N] = \{1, ... , N\}$ is  characterized by his/her type $\theta_i\in [-A,A]$ and the types satisfy $\theta_1 < \dots < \theta_i < \theta_{i+1} < \dots < \theta_N$. The utility of an agent of type $\theta_i$ from an alternative $x \in[-A,A]$ is given by 
$$u_{\theta_i,x}:=-(x-\theta_i)^2.$$ 

Therefore, it has a unique maximum at $\theta_i$ and is symmetric around $\theta_i$. 
We will use the notation $u_{\theta_i, v}$ to represent the expected utility of an agent who chooses to vote. 

Let $M$ be the median of $[N]$: if $N$ is even, $M$ is the average of the two middle elements of $[N]$; if $N$ is odd, $M$ is the unique middle element of $[N]$. In the case of an odd number of agents, we denote the unique median agent as $m:=M$, which, due to the ordering of the $\theta_i$, is precisely the median of the set $[N]$. If there is an even number of agents, there might not be a unique median agent. Then, we look at the agents to the left and right of $M$, denoting the left median agent as $m_l := \max\{i \in [N] \mid i \leq M\}$ and the right median agent as $m_r := \min\{i \in [N] \mid i \geq M\}$.

In the PoV-procedure described in two stages above, some edge cases might occur that we have not dealt with yet. We impose the following rules to resolve these: 
 \begin{enumerate}
 \item If no agent chooses to make a proposal, an alternative is selected uniformly at random from the agents' types $\{\theta_i \in [-A,A]\mid i \in [N]\}$ and is implemented without any vote.
 \item If a single agent opts to propose, their proposal is automatically implemented.
 
 \item If two or more agents choose to propose, the general procedure described above applies.
 
 \item If the vote ends in a draw, a coin flip decides which of the two selected proposals is implemented. In particular, if everyone opts to propose, two of the submitted proposals are selected uniformly at random, where one chooses an $a$ uniformly at random from $\CP$ and then a $b$ uniformly at random from $\CP \setminus \{a\}$. Subsequently, a coin flip decides between the two. This is the same as implementing one of the proposals uniformly at random.

\end{enumerate}

To simplify the analysis, we impose the following tie-break rules. 

\begin{enumerate}

\item If an agent expects the same utility from both voting and submitting a proposal, he/she decides to vote.

\item If a voter is indifferent between two proposals, he/she abstains from voting.

\end{enumerate}

The first tie-breaking rule is quite important. In order to ensure that proposal-making is not too attractive, one could impose an arbitrary, sufficiently small, and positive cost for proposal-making. This would strengthen the results, as agents will not make a proposal if the proposal does not improve the agent's expected utility. Notice that the cost of the proposal also needs to be sufficiently small to guaranty that for those agents for whom proposal-making was strictly better than voting, this remains true. 

To analyze the dynamic voting game, we use the equilibrium concept of Subgame Perfect Nash Equilibrium (SPNE) with the refinement that agents in the last stage eliminate weakly dominated strategies. Since there is only one voting round, this essentially means that all agents who have retained their voting rights will vote sincerely in the final stage and thus vote for the alternative they prefer more than the other one, if two alternatives are put to a vote.

 \section{The Simple Result}
\label{Simple-Result}

In this section, we analyze how proposers choose their optimal proposals, taking into account their types, and the resulting equilibria of the PoV procedure.
The following theorem shows the existence of equilibria in the PoV-procedure.

%In this section, we analyze how proposers choose their optimal proposals, taking into account their type, the distribution of preferences, and the resulting equilibria of the PoV procedure.
The following theorem shows the existence of equilibria in the PoV-procedure.

\begin{theorem}\label{theorem:vop-theorem}
Suppose that we have a polity with $N$ agents, each characterized by their type $\theta_i \in [-A,A]$. Then there exists an equilibrium for the Propose or Vote-procedure among the $N$ agents. This equilibrium is given as follows:
\begin{enumerate}
    \item If $N$ is odd, the constellation where the median agent $m$ of type $\theta_m$ opts for submitting the proposal $\theta_m$ and all other agents opt for taking the vote is an equilibrium.
    \item If $N$ is even, the constellation where the left and the right median opt for submitting their own type as a proposal and all the other agents opt to vote is an equilibrium.
\end{enumerate}
\end{theorem}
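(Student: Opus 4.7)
The plan is to verify the two candidate profiles are subgame-perfect Nash equilibria by backward induction. In the voting stage the refinement forces sincere voting: each voter picks the proposal closer to her peak and, by the tie-break, abstains if equidistant. It therefore suffices to rule out unilateral deviations in the first two stages against the implied lotteries.

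For the odd case the candidate has $|\CP|=1$, so $\theta_m$ is implemented by rule and the median obtains her first-best. I would first rule out the median's deviations: proposing any $x\neq\theta_m$ still leaves $|\CP|=1$, so $x$ is implemented with strictly lower utility $-(\theta_m-x)^2<0$, while switching to voting leaves $\CP=\emptyset$ and triggers the uniform draw over $\{\theta_1,\ldots,\theta_N\}$, again strictly worse since the types are distinct. For a non-median $i$ deviating to propose some $x_i$, the vote pits $\theta_m$ against $x_i$ among the $N-2$ remaining agents. The key vote-count is that the $(N-1)/2$ agents strictly on the opposite side of $\theta_m$ from $x_i$ (WLOG to the right of $\theta_m$) remain voters and all prefer $\theta_m$; since $(N-1)/2>(N-2)/2$ this is a strict majority, so $\theta_m$ wins and the outcome is unchanged. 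The tie-break favoring voting then rules out the deviation.

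For the even case write $N=2k$. In the candidate the two-proposal vote splits $(k-1)$ vs.\ $(k-1)$ between the strictly-left and strictly-right voters, resolved by coin flip; the expected utility of $m_l$ is $-\tfrac12(\theta_{m_r}-\theta_{m_l})^2$ and of $m_r$ symmetric. I would verify that $m_l$ has no profitable deviation: opting to vote leaves $m_r$ as the sole proposer so $\theta_{m_r}$ is implemented with certainty at strictly worse utility; proposing $x\neq\theta_{m_l}$ either leaves the $(k-1)$–$(k-1)$ split intact (the strictly outer voters still favor the nearer side) and simply replaces $\theta_{m_l}$ by $x$ in the coin-flip lottery at strictly lower utility, or shifts enough voters that $\theta_{m_r}$ wins outright, also strictly worse. $m_r$ is symmetric. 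For a non-median $i$ deviating to propose $x_i$, there are three proposers, so each pair is selected with probability $1/3$ and the voter pool has $N-3$ members; I would analyze each of the three pairs — against $(m_l,m_r)$ the removal of $i$ from its side breaks the tie in favor of the opposite median, while the two pairs involving $x_i$ are handled by the vote-count argument used in the odd case — and show that the resulting three-outcome lottery is weakly dominated for $i$ by the equilibrium lottery.

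The main obstacle is this last step: tracking how removing the deviator $i$ from the voter pool reshapes all three pairwise votes and aggregating the resulting lottery across the sign-configurations of $x_i$ relative to $\theta_i$, $\theta_{m_l}$, and $\theta_{m_r}$. Each sub-case is routine given single-peakedness and symmetry of the quadratic payoffs, but the combinatorial bookkeeping is the tedious part and is the place where care is needed to invoke the tie-break rule to close the argument.
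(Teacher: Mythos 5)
Your overall strategy---backward induction with sincere voting in the last stage, then checking unilateral deviations---is exactly the paper's, and the parts you do carry out (the median's deviations in the odd case, the two medians' deviations in the even case) match the paper's computations. Two points, however, need attention. First, in the odd case your vote count is not correct as stated: if agent $i$ with $\theta_i<\theta_m$ deviates and proposes $x_i>\theta_m$, the ``$(N-1)/2$ agents strictly on the opposite side of $\theta_m$ from $x_i$'' are the left-of-median agents, one of whom is $i$ herself and hence no longer a voter; only $(N-3)/2$ remain, which is not a majority of the $N-2$ voters. The median of the remaining voters is then $\theta_{m+1}$, and a proposal $x_i\in(\theta_m,\,2\theta_{m+1}-\theta_m)$ genuinely defeats $\theta_m$. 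The deviation is still unprofitable, because any such winning $x_i$ lies beyond $\theta_m$ on the far side from $\theta_i$ and so yields strictly lower utility than $\theta_m$---but that extra observation is required to close the argument. (The paper's own one-line claim that ``$\theta_m$ will always be implemented'' glosses over the same sub-case.)

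Second, and more substantially, you defer the step you yourself call ``the main obstacle'': showing that a non-median voter gains nothing by becoming a third proposer in the even case. The paper actually executes this, and it collapses to a short computation rather than heavy bookkeeping. For an agent $i$ to the left of $m_l$ who proposes his peak, the three equally likely pairs yield $\theta_{m_r}$ (the deviator's departure from the voter pool breaks the median-vs-median tie in favor of $\theta_{m_r}$), $\theta_{m_l}$, and $\theta_{m_r}$ respectively, so the deviation lottery is $\tfrac{1}{3}\theta_{m_l}+\tfrac{2}{3}\theta_{m_r}$ against the equilibrium's $\tfrac{1}{2}$--$\tfrac{1}{2}$ split; the comparison of quadratic losses reduces to $|\theta_i-\theta_{m_r}|>|\theta_i-\theta_{m_l}|$, which holds by the agent's position. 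A complete argument should also dispose of proposals $x_i$ placed to the right of $\theta_i$ that could defeat one of the medians (the paper only explicitly rules out proposals left of the peak), but the needed inequality is the one above in every sub-case. As written, your proposal is a correct plan whose decisive step is announced but not executed.
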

The proof of Theorem \ref{theorem:vop-theorem} is given in the appendix. 
Theorem \ref{theorem:vop-theorem} illustrates the inherent trade-offs. Consider, for example, the case where $N$ is odd. Agents other than $m$ face the following dilemma: by choosing to become proposers, they lose their voting power, which may result in the final decision either remaining unchanged or shifting further from their preferred alternative. As a result, they have no incentive to take on the role of proposer. If $N$ is even, agents other than the two median types have no incentive to become proposal makers. Suppose that the agent is to the left of the left median. Then, becoming a proposal-maker and offering his/her peak would shift the probability that the right median is selected to $\frac{2}{3}$ and the probability that the left median is selected is equal to $\frac{1}{3}$. If the agent remains a voter, the probability of the left median being selected is $\frac{1}{2}$, and the probability that the right median is selected declines to $\frac{1}{2}$.

\section{Other Equilibria}
\label{Ohter-Equilibria}

We next discuss whether other equilibria exist. 

\subsection{Results for $N$ odd}

When $N$ is odd, we can sharpen the main result from the last section. 

\begin{prp}\label{Proposition:Only-one}
    Suppose that $N$ is odd and an equilibrium has only one proposer. Then, this has to be the equilibrium found in Theorem \ref{theorem:vop-theorem}.
\end{prp}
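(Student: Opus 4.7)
The plan is to exploit the fact that a single proposer dictates the outcome, and then rule out every non-median candidate by exhibiting a profitable deviation for the median agent $m$ itself.

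First I would observe that if exactly one agent, say $i$, opts to propose, then by the stated procedural rule the proposal is implemented with certainty (no vote takes place). Since $i$'s utility $-(x-\theta_i)^2$ is uniquely maximized at $x=\theta_i$, in any equilibrium the sole proposer submits $\theta_i$, so the equilibrium outcome equals the proposer's own type.

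Next, I would argue by contradiction that the sole proposer must be the median agent $m$. Suppose instead that the sole proposer is some $i \neq m$; without loss of generality assume $i<m$, so the equilibrium outcome is $\theta_i<\theta_m$ and $m$'s payoff is $-(\theta_m-\theta_i)^2<0$. Consider the deviation in which agent $m$ also opts to propose, submitting $\theta_m$. After this deviation there are exactly two proposers, so both proposals are put to a (weakly dominance-refined) sincere vote among the remaining $N-2$ agents. Every agent $k$ with $\theta_k>\theta_m$ satisfies $|\theta_k-\theta_m|<|\theta_k-\theta_i|$ and thus strictly prefers $\theta_m$; there are $N-m=(N-1)/2$ such agents. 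Because $N$ is odd, $N-2$ is odd as well, and a strict majority among $N-2$ voters requires at least $(N-1)/2$ votes. Hence $\theta_m$ wins with certainty, raising $m$'s payoff to $0$ and contradicting equilibrium. The case $i>m$ is symmetric.

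Combining the two steps, in any one-proposer equilibrium the proposer must be $m$ and must propose $\theta_m$, which is exactly the equilibrium of Theorem \ref{theorem:vop-theorem}. The only delicate point is the vote-counting step: the $(N-1)/2$ agents on the far side of $\theta_m$ meet the majority threshold for $N-2$ sincere voters precisely because $N$ is odd, so this argument hinges on the parity assumption — consistent with the fact that the paper only claims uniqueness of the one-proposer equilibrium when $N$ is odd.
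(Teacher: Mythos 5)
Your proof is correct and follows essentially the same two-step structure as the paper's: first, the sole proposer must propose his own peak since it is implemented automatically; second, a non-median proposer is ruled out by exhibiting a profitable deviation that wins the ensuing two-proposal vote. The only cosmetic difference is the choice of deviating agent—you let the median $m$ itself enter and propose $\theta_m$ (which matches the intuition the paper gives in the main text, and your vote count of $(N-1)/2$ out of $N-2$ sincere voters is right), whereas the appendix proof instead has agent $p+1$ deviate and propose $\theta_{p+1}$; both deviations work.
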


The intuition for this result is simple. Suppose one agent other than $m$ is proposing in equilibrium. Since no one else is proposing, he will propose his type. Then, the median is in the group of voters. If he proposes his type, he will surely win the election, as being the median will guarantee him at least half of the voters. This will make him better off. This proves that no equilibrium with only one proposer exists if the proposer is not the median agent.

In addition, we can show that there does not exist an equilibrium with exactly two proposers when $N$ is odd.

\begin{prp}\label{proposition:no-two-proposers-N-odd}
    Assume that $N$ is odd. Then, there are no equilibria with $|\mathcal{CP}|=2$.
\end{prp}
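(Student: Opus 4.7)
Assume, for contradiction, that some SPNE has $\CP=\{p_1,p_2\}$ with $p_1<p_2$, and split into three cases based on where $\{p_1,p_2\}$ sits relative to the median agent $m$. In each case I would identify an agent with an (at least weakly) profitable deviation that, together with tie-breaking rule~1, contradicts the SPNE.

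\emph{Cases A and B (easy cases).} Case~A is $m\in\CP$; WLOG $\CP=\{i,m\}$ with $i<m$. Case~B is $m\notin\CP$ with both $p_1,p_2$ on the same side of $m$; WLOG $p_1<p_2<m$. In both cases a straightforward count of the $N-2$ remaining voters places the voter median at $\theta_{m+1}$: in Case~A removing $i$ and $m$ leaves $(N-3)/2$ voters left of $m$ and $(N-1)/2$ to the right, while in Case~B removing $p_1,p_2$ leaves $(N-5)/2$ voters left of $m$, $m$ himself, and $(N-1)/2$ to the right. Since both proposers' peaks then lie weakly to the left of the voter median, the right-most proposer ($m$ in A, $p_2$ in B) can always beat the left-most one by proposing closer to $\theta_{m+1}$; a direct analysis of the proposal subgame shows that every Nash equilibrium yields an outcome at least as large as the right-most peak. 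Consequently the left-most proposer's payoff is weakly bounded above by what he would obtain as a voter with the right-most proposer alone in $\CP$ (who then implements his own peak), and tie-breaking rule~1 forces the left-most proposer to vote, contradicting his membership in $\CP$.

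\emph{Case C (main obstacle): $p_1<m<p_2$.} The same count now places the voter median at $\theta_m$, and the subgame pushes both proposals toward $\theta_m$; neither $p_1$ nor $p_2$ wants to leave $\CP$, because if either did, the remaining lone proposer would propose his own extreme peak, strictly worse for both than the near-$\theta_m$ two-proposer outcome. The required deviation must therefore come from a non-proposer, and the natural candidate is $m$ himself. If $m$ deviates to propose $\theta_m$, the three-proposer subgame has voter pool $[N]\setminus\{p_1,m,p_2\}$, each of the three pairs of proposals is drawn with probability $1/3$, and under the natural subgame equilibrium in which each proposer proposes his own peak, sincere voting in each pair puts positive probability on the outcome $\theta_m$; a direct expected-utility calculation shows $m$'s deviation to be strictly improving. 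The hardest step is ruling out the degenerate subgame NE in which $p_1$ and $p_2$ coordinate on $(\theta_m,\theta_m)$ so that the outcome is already $\theta_m$ and $m$'s deviation becomes useless; this coordination equilibrium is sustained only by knife-edge indifference of $p_1,p_2$ among many proposals, and can be excluded by a standard refinement (e.g.\ trembling-hand perfection, under which any small perturbation of $p_2$'s proposal away from $\theta_m$ makes $p_1$ strictly prefer a proposal closer to his own peak), leaving only the peak-proposing subgame equilibrium against which $m$'s deviation was shown to be strictly improving.
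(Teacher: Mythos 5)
Your Cases A and B track the paper's argument for the configurations in which both proposers lie on the same side of the \emph{voter} median (the paper splits by the position of $\theta_{p_1},\theta_{p_2}$ relative to $\theta_{mv}$ rather than to $\theta_m$, but the content is the same): the proposer whose proposal cannot strictly beat the other's is weakly better off voting, and tie-breaking rule~1 then forces him out of $\CP$. Those parts are fine, modulo steps you assert rather than prove (e.g.\ that every proposal-stage equilibrium yields an outcome weakly to the right of the right-most peak, which the paper obtains from its Lemma~\ref{lem:interval}).

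The genuine gap is in Case C. Your argument is built on the premise that the two-proposer proposal subgame with $p_1<m<p_2$ has an equilibrium with both proposals ``near $\theta_m$,'' and you then hunt for a deviation by $m$ into a three-proposer subgame. That premise is false, and recognizing this is exactly how the paper closes the case: whichever proposal is farther from the voter median loses, and that proposer can strictly improve by re-proposing $2\theta_{mv}-x_{\text{winner}}+\epsilon$, which wins the vote and is strictly closer to his own peak (the reflection of a point on the far side of $\theta_{mv}$ is strictly closer to a peak on the near side). If the two proposals are equidistant and distinct, each proposer strictly gains by moving slightly inward; if they coincide, or if one of them equals $\theta_{mv}$, the other proposer's presence is payoff-irrelevant and tie-breaking rule~1 forces him to vote. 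So no proposal profile survives, and no third-party deviation is needed. Your route, by contrast, leaves two holes: (a) the profitability of $m$'s entry depends on the continuation equilibrium of the three-proposer subgame, which you do not pin down --- and with probability $1/3$ the drawn pair excludes $m$'s proposal entirely, so the comparison is not ``a direct expected-utility calculation'' when the putative two-proposer outcome is already close to $\theta_m$; and (b) you invoke trembling-hand perfection to kill the coincident profile $(\theta_m,\theta_m)$, but that refinement is not part of the paper's solution concept; the paper's tie-breaking rule~1 disposes of that profile directly. As written, Case C does not go through.
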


The logic here is more subtle. One has to first establish which proposals would be made in equilibrium. In the Appendix, we show that  if the number of voters $|\mathcal{CV}|$ is odd, which would happen if two agents opt for proposal-making, it is enough to consider the equilibria where for all $p \in \mathcal{CP}$ we have $x_p \in [\min\{\theta_p, \theta_{mv}\}, \max \{\theta_p, \theta_{mv}\}]$, where $\theta_{mv}$ is the type of the median among the voters. In any of these possible equilibria, all proposers would propose their own peak. Again, we can prove that there are always incentives to deviate from a constellation with two proposers. The detailed proof is provided in the Appendix.

\subsection{Counterexample for N even}

We now provide a counterexample to prove that, for an even number of agents, there can be equilibria (other than the one characterized in Theorem \ref{theorem:vop-theorem}) with exactly two proposers. Indeed, considering the case with $N=4$ agents, we provide an example of an equilibrium that differs from the one in Theorem \ref{theorem:vop-theorem} and that has exactly two proposers.

\begin{example}[Counterexample]
    Let $N=4$ and let the types be given by $\boldsymbol{\theta}=(-4, -3, 3, 4)$. Suppose that the two most extreme agents, those with type $\theta_1$ and $\theta_4$, decide to propose their own peaks, and the other two, those with type $\theta_2$ and $\theta_3$, are voting. We claim that this is an equilibrium. 
    
    \begin{minipage}{\linewidth}
        \vspace{10pt}
        \centering
        \begin{tikzpicture}
            \draw (-5,0) -- (5,0);
            \draw (-4,-0.2) -- (-4, 0.2); \node at (-4.5, -0.5) {$x_{1}=\theta_1$};
            \draw (-3,-0.2) -- (-3,0.2); \node at (-3, -0.5) {$\theta_2$};
            \draw (3,-0.2) -- (3,0.2); \node at (3, -0.5) {$\theta_3$};
            \draw (4,-0.2) -- (4,0.2); \node at (4.5, -0.5) {$\theta_4=x_{4}$};
        \end{tikzpicture}
        
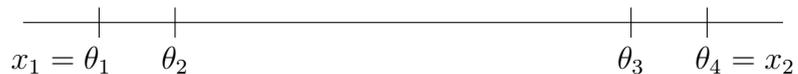
\captionof{figure}{Alternative equilibrium where the two most extreme agents propose their own peaks.}
        \label{fig:no_uniqueness_n=4}
        \vspace{15pt}
    \end{minipage}

    Observe that the left proposal, $x_1=\theta_1$, is closer to the left voter's type $\theta_2$ and the right proposal, $x_4=\theta_4$, is closer to the right voter's type $\theta_3$. Hence, the vote ends in a draw, and each of them is implemented with probability $1/2$. Consider the left proposer. Proposing its own peak leads to an expected utility of
    \begin{equation*}
        \mathbb{E}[u_{\theta_1, \theta_1}]= -\frac{1}{2}(\theta_4-\theta_1)^2=-32.
    \end{equation*}
    If agent $1$ decides to vote instead, the other proposal would win the vote and therefore lead to an expected utility of 
    $$ \mathbb{E}[u_{\theta_1, v}] = -(\theta_1-\theta_4)^2=-64.$$
    If agent $1$ decides to move its proposal to the right, the new proposal must win against the other proposal $x_4$, since otherwise it only reduces the expected utility. Hence, the closest possible proposal that has a positive probability of winning against $x_4$ is $\tilde{x}_1=\theta_3-(x_4-\theta_3)=2 \theta_3 - \theta_4= 2$. However, if $\tilde{x}_1$ is implemented, agent $1$ gets a utility of $-(\tilde{x}_1- \theta_1)^2=-36$. Thus, there exists no profitable deviation for the left proposer.

    Next, we consider the left voter of type $\theta_2$ who obtains under the current strategy an expected utility of
    \begin{equation*}
        \mathbb{E}[u_{\theta_2, v}]=-\frac{1}{2}\left((\theta_1-\theta_2)^2+(\theta_4-\theta_2)^2 \right)=-25.
    \end{equation*}
    
    If agent $2$ decides to make a proposal instead of voting, then the new median voter is $3$. If agent $2$ makes a proposal $\tilde{x}_2$ that is strictly farther from $\theta_3$ than $x_4=\theta_4$, $x_4$ will be implemented with a probability of $2/3$. Thus,
    \begin{equation*}
        \mathbb{E}[u_{\theta_2, \tilde{x}_2}]\leq - \frac{2}{3} (x_4-\theta_2)^2=-\frac{98}{3}<-25.
    \end{equation*}
    If $\tilde{x}_2$ wins against $x_4$, it has to hold that $\tilde{x}_2>\theta_3-(x_4-\theta_3)=2\theta_3-\theta_4=2$. But then
    \begin{equation*}
        \mathbb{E}[u_{\theta_2, \tilde{x}_2}] = - \frac{2}{3}  (\tilde{x}_2 - \theta_2)^2 - \frac{1}{3} (x_4- \theta_2)^2  \leq -\frac{99}{3} < -25.
    \end{equation*}
    Hence, there does not exist a suitable deviation from the current strategy for agent $2$.
    
    Due to symmetry, we can show that there exists no profitable deviation for players of types $\theta_3$ and $\theta_4$ by using the same arguments as for $\theta_2$ and $\theta_1$, respectively. Thus, we can conclude that we have indeed found an equilibrium.

\end{example}

\section{Global Uniqueness}\label{GlobalUniqueness}

\subsection{Global Uniqueness with $N$ Odd}
We now establish the global uniqueness of the equilibrium characterized in Theorem \ref{theorem:vop-theorem} when $N$ is odd. 

\begin{theorem}\label{theorem:Uniqueness_of_N_odd}
    Given an odd number of agents $N$, there exists no equilibrium other than the one described in Theorem \ref{theorem:vop-theorem}.
\end{theorem}

The proof of Theorem \ref{theorem:Uniqueness_of_N_odd} is provided in the Appendix. This theorem constitutes the second main result of our paper. The intuition behind it is as follows: 

When there is only one agent proposing, we have shown that no other agent has an incentive either to propose or to vote. We then consider all other possible cases in which the number of proposers differs from one. In each such case, there always exists at least one agent who is proposing but has an incentive to switch to voting. The key idea is that, with an odd number of agents, one can always identify a proposal that would lose if put to a vote. For the agents behind such a proposal, voting becomes strictly preferable to proposing. The full, detailed proof is presented in the Appendix.

\subsection{Global Uniqueness with N even}
As we have seen in the previous section, when the number of agents is even, the equilibrium is generally not unique. To ensure that the number of agents is always odd, we introduce a preliminary stage in which an artificial agent is added to the system. This artificial agent, referred to as the \emph{AI voter}, participates in the voting process only when the number of real agents is even. The AI voter can be thought of as a machine that casts an additional vote. The mechanism of this additional agent is as follows: when introducing it, a fair coin is flipped, and this agent is assigned either peak $A$ or $-A$, depending on the outcome. From this point forward, it acts as a usual agent. Note that, in order to ensure that the strict ordering of the preferences still holds, one may have to slightly enlarge their possible support from $[-A,A]$ to $[-\tilde A,\tilde A]$ for some $\tilde A > A$. In this new model, there is again a unique equilibrium.

\begin{theorem}\label{thm:model_iii_uniqueness}
    Let the number of agents be even; the introduction of an AI voter will make it odd. The scenario in which the median agent $m$ of the altered set of agents (including the AI voter) opts to present the proposal $\theta_m$ and all other agents opt to vote is the unique equilibrium.
\end{theorem}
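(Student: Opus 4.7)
My plan is to establish existence via Theorem~\ref{theorem:vop-theorem} and uniqueness by case analysis on $|\mathcal{CP}|$.

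\emph{Existence.} The profile with $\mathcal{CP}=\{m\}$ and $x_m=\theta_m$ is an equilibrium by Theorem~\ref{theorem:vop-theorem}(1), and it remains one under the adjusted procedure because the two procedures coincide whenever $|\mathcal{CP}|\leq 2$: with one proposer the sole proposal is implemented directly, and with two proposers both are put to a single majority vote.

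\emph{Uniqueness.} I rule out each alternative configuration. For $|\mathcal{CP}|=0$ the outcome is a uniformly random peak, so any agent strictly improves by becoming the sole proposer of their own peak (utility $0$). For $|\mathcal{CP}|=1$ with $p\neq m$, Proposition~\ref{Proposition:Only-one} applies -- its deviation argument only invokes a single pairwise majority vote between two proposals, which is identical in both procedures. For $|\mathcal{CP}|=2$, the adjusted procedure reduces exactly to the original PoV, so Proposition~\ref{proposition:no-two-proposers-N-odd} rules this case out.

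The novel case is $|\mathcal{CP}|\geq 3$. Here I would leverage \cite{gersbach2025votingrandomproposersrounds}: the iterative majority procedure implements the Condorcet winner of the proposals with respect to the voters' preferences, which for single-peaked voters on an interval is the proposal closest to the median voter's peak $\theta_{mv}$. Denote this winner by $w$. Let $p_L=\arg\min_{p\in\mathcal{CP}}\theta_p$ and $p_R=\arg\max_{p\in\mathcal{CP}}\theta_p$; since $|\mathcal{CP}|\geq 3$ we have $p_L\neq p_R$, and at least one of the two is distinct from $m$. I would show that the non-median extremal proposer whose proposal is not $w$ can weakly improve by switching to voting: adding them to the voter pool shifts $\theta_{mv}$ weakly toward their own type, which in turn shifts the winning proposal weakly toward that type and hence weakly closer to the deviator's peak. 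If the improvement is strict they strictly prefer voting; if the outcome is unchanged, the tie-break rule (indifferent agents vote) still forces the switch. Either way, $|\mathcal{CP}|\geq 3$ cannot arise in equilibrium.

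The main obstacle is the monotonicity claim underlying this last step: that removing a proposer from $\mathcal{CP}$ and adding them to $\mathcal{CV}$ shifts the winning proposal weakly toward their peak. One must simultaneously track two effects -- the proposal set shrinks (and if the deviator's proposal were itself $w$, the winner must change, though one can verify it still lies on the same side of the deviator's peak as $w$) and the median voter shifts monotonically with the added voter's type. Additional care is needed when $|\mathcal{CV}|$ is even, since the median voter is then only defined up to a pair and pairwise ties yield randomized outcomes; monotonicity then holds in expectation, which is enough to carry the argument through.
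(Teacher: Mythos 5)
Your proof is correct in outline and its engine is the same as the paper's: both arguments hinge on the fact that the iterative elimination procedure implements the Condorcet winner among the submitted proposals (the paper's Proposition~\ref{prop:model_iii_condorcet_winner}), i.e.\ the proposal closest to the median voter's peak, so that any proposer whose proposal does not win can switch to voting with weakly higher utility, whereupon the tie-break rule (indifferent agents vote) kills the configuration. The decomposition differs, though. The paper applies this deviation argument uniformly to every $|\mathcal{CP}|\geq 2$ and then disposes of the single-proposer case with a fresh deviation (agent $m+1$ proposes $\theta_m$ itself, which then coincides with the new median voter's peak and wins), whereas you splice in Propositions~\ref{Proposition:Only-one} and \ref{proposition:no-two-proposers-N-odd} for $|\mathcal{CP}|=1,2$ and reserve the Condorcet-winner argument for $|\mathcal{CP}|\geq 3$. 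That is redundant but harmless, and your observation that the adjusted procedure coincides with plain PoV for $|\mathcal{CP}|\leq 2$ is exactly the justification needed to import those propositions. You are also more explicit than the paper about the monotonicity step --- that adding the deviator to the voter pool shifts $\theta_{mv}$ weakly toward $\theta_q$ and hence shifts the winning proposal weakly toward $\theta_q$ in the deviator's preference order, with ties handled in expectation --- which the paper asserts in a single clause.

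One slip to fix: in the $|\mathcal{CP}|\geq 3$ case you insist on deviating ``the non-median extremal proposer whose proposal is not $w$,'' but such an agent need not exist (e.g.\ when $\mathcal{CP}=\{m,m+1,m+2\}$, the left extremal proposer is $m$ and the right extremal proposer may hold the winning proposal). Neither extremality nor non-medianness is needed for the argument: \emph{any} proposer whose proposal is not the winner serves as the deviator, since removing a non-winning proposal does not change the Condorcet winner with respect to the old electorate, and the median-shift argument then applies verbatim. Stating it for an arbitrary non-winning proposer both repairs the case selection and recovers the paper's cleaner conclusion that $|\mathcal{P}^*|=1$ directly.
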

\begin{proof}
The proof of Theorem \ref{thm:model_iii_uniqueness} builds on the theory we have established for odd $N$. We obtain the existence of an equilibrium from Theorem \ref{theorem:vop-theorem} and its uniqueness readily from Theorem \ref{theorem:Uniqueness_of_N_odd}.
\end{proof}

\section{Elections}
\label{elections}

\subsection{Application}

The procedure can be applied in the same way to an election to office. Agents decide whether to apply for proposal-making or voting. Now, a proposal is a candidate for office. An agent opting for proposal-making may propose himself/herself for the office or some other agents. The remaining agents then vote. Our analysis and results can be readily transferred to such a setup. The only difference is that the set of possible alternatives is now constrained to the set of agents $N$ and is a discrete set of alternatives. Note that a candidate who is proposed by another agent can vote.\footnote{Again, one can add an AI voter to achieve an odd number of participants, but the AI voter is only active in voting. Yet, there are a couple of subtle issues with an artificial agent participating in elections, which we will spell out in future versions of the paper.} Our results generalize to elections. 

In the example that follows, we demonstrate how the procedure ensures a unique equilibrium when three agents are involved.

\subsection{Illustration for \( N = 3 \) Agents}
\label{IllustrationN3Agents}
Consider a society with three agents, where their types are given by \(\theta_1 < \theta_2 < \theta_3\). Each agent can choose to either propose a candidate or vote. We explore all possible constellations of proposers and voters in the Appendix and prove that there is a unique equilibrium. In the following, we simply report that with one proposer, there is a single equilibrium.

Suppose that only one agent opts to make a proposal. The proposer suggests a candidate, and the other two agents vote. The proposer will propose the candidate who maximizes their utility and, hence, themselves. 

\begin{table}[h]
\centering
\renewcommand{\arraystretch}{1.5}
\begin{tabular}{|c|c|c|c|}
\hline
Proposer & Utility for Agent 1 & Utility for Agent 2 & Utility for Agent 3 \\
\hline
Agent 1 & 0 & \(-(\theta_2 - \theta_1)^2\) & \(-(\theta_3 - \theta_1)^2\) \\
\hline
Agent 2 & \(-(\theta_1 - \theta_2)^2\) & 0 & \(-(\theta_3 - \theta_2)^2\) \\
\hline
Agent 3 & \(-(\theta_1 - \theta_3)^2\) & \(-(\theta_2 - \theta_3)^2\) & 0 \\
\hline
\rowcolor{gray!30}
Equilibrium & No & Yes & No \\
\hline
\end{tabular}
\caption{Utilities realized by each agent when only one agent proposes.}
\label{tab:one_proposer}
\end{table}
\begin{itemize}
    \item If agent 1 proposes himself (or, more precisely in the model's terms, his own peak), agent 2 has an incentive to propose himself, which would yield a higher utility of \(0\). Indeed, if agent 2 proposes, the proposals would be put to a vote, and since agent 3 would choose the proposal closer to his type, agent 2 would win the election since \(\theta_2\) is closer to \(\theta_3\) than \(\theta_1\).

    \item If agent 2 proposes himself, neither agent 1 nor agent 3 has an incentive to deviate. Even if they were to propose, they would not achieve a better outcome than by voting. Specifically, any proposal they make would either result in the same utility or a lower one compared to voting. Given the condition that agents choose to vote if they are indifferent between voting and proposing, this implies they have no incentive to propose.

    \item If agent 3 proposes \(\theta_3\), agent 2 has an incentive to propose \(\theta_2\), which would yield a higher utility of \(0\).
\end{itemize}

Thus, the only stable scenario is when the median agent (agent 2) proposes himself. This aligns exactly with the equilibrium characterized by Theorem \ref{theorem:vop-theorem}.

\section{Discussion and Outlook}
\label{section:conclusion}

We demonstrated that there exists an equilibrium in which the Propose-or-Vote mechanism efficiently implements the socially optimal alternative in a simple setting, and that the Propose-or-Vote (PoV) procedure selects the Condorcet winner in a single voting round.
We have used a simple framework in which the distribution of the peaks of individual preferences is public knowledge, and consequently the peaks are as well.

Several potential extensions suggest themselves for future research. These include environments with asymmetric information, richer distributions of agents and utility functions, and alternative voting procedures. For instance, one might allow for a run-off among all submitted proposals rather than restricting attention to two randomly selected ones. Another possibility is to introduce a preliminary stage in which the set of prospective proposers is revealed prior to proposal submission. In our framework, however, uniqueness in the implementation of the Condorcet winner implies that such modifications cannot enhance the outcome. On the contrary, they may generate additional and undesirable equilibria in the present environment.\footnote{Illustrative examples are available upon request.} Nevertheless, in alternative institutional or informational settings, such procedural refinements might well yield novel dynamics and welfare implications worthy of further investigation.

\newpage

\bibliographystyle{apacite}
\bibliography{vrp}
\newpage

\appendix

\section{Appendix} \label{sec_proofs}
\renewcommand{\theproposition}{\Alph{section}.\arabic{proposition}}
\setcounter{proposition}{0}

\renewcommand{\thelemma}{\Alph{section}.\arabic{lemma}}
\setcounter{lemma}{0}

\renewcommand{\thefigure}{A.\arabic{figure}}
\setcounter{figure}{0}

\subsection{Proof of Theorem \ref{theorem:vop-theorem}}
\begin{proof}
    We start with the case where $N$ is odd. We show that the situation in which only the median voter opts to propose his own type and all other agents choose to vote is an equilibrium; that is, no agent has the incentive to change his strategy in this situation, given that all other agents stick to their current choice. We now show that any agent other than the median agent $m$ will stay with opting to vote in the first stage. This follows from the fact that agent $i \neq m$ never improves his payoff by deviating from voting to proposing. By symmetry, it suffices to deal with the case that an agent $i < m$ deviates to proposing some $x_i \in [-A, A]$ instead of voting. After the deviation, we have $\widetilde {CP} = \{i, m\}$ and $\widetilde {CV} = [N] \setminus \{i, m\}$. The median voter in $\widetilde {CV}$ then has type $\theta_{m+1}$, and note that $\theta_m < \theta_{m+1}$.
    First assume that $x_i \leq \theta_m$. Both $x_i$ and $\theta_m$ lie to the left of $\theta_{m+1}$, and $\theta_m$ is to the right of $x_i$. Hence $\theta_m$ is closer to $\theta_{m+1}$ than $x_i$ is, so the majority in $\widetilde {CV}$ strictly prefers $\theta_m$. The implemented policy is $\theta_m$, and agent $i$'s payoff is unchanged. On the other hand, assume that $x_i > \theta_m$. Since
    \[
    \theta_i < \theta_m < x_i,
    \]
    we always have
    \[
    |x_i - \theta_i| = x_i - \theta_i > \theta_m - \theta_i = |\theta_m - \theta_i|,
    \]
    so whenever $x_i$ is implemented, agent $i$'s utility satisfies
    \[
    u_i(x) = -(\theta_i - x_i)^2 < -(\theta_i - \theta_m)^2 = u_i(\theta_m).
    \]
    If $x_i$ loses and $\theta_m$ is implemented, his utility is exactly $u_i(\theta_m)$. Thus, no agent with $i < m$ can profit by deviating.
    
    Furthermore, the median agent $m$ has no incentive to change from proposal-making to voting as the implemented policy would then be chosen uniformly at random from the agents’ types $\{ \theta_i \}_{i \in \{1, \ldots, N\}}$, leaving the median agent with a worse expected utility of 
    $$\mathbb{E}(u_{\theta_m,v}) = - \frac{1}{N}\sum_{i = 1}^N (\theta_m - \theta_i)^2 < 0 =u_{\theta_m,\theta_m}.$$ 
    Moreover, the median voter maximizes his utility by proposing his own type if he is the only proposal-maker; thus, the median voter also has no incentive to change his proposal. In the second stage, there is only one choice for the voters, so they cannot change their strategy in this last round. Hence, we found an equilibrium.

    Consider now the case where $N$ is even. We show that the situation in which both the left median agent $ m_l$ and the right median agent $m_r$ propose their own type and all the other agents vote is an equilibrium. Notice that $m_l \neq m_r$ since the agents' types are pairwise different and we face a situation with an even number of agents.
    
     In the second stage, voters will always vote for the policy that is closest to their type to maximize their utility. Hence, proposing their own type is the best strategy for the left median, as his utility of proposing his own type is $$\mathbb{E}(u_{\theta_{m_l}, \theta_{m_l}}) = -\frac{1}{2} (\theta_{m_l} - \theta_{m_r})^2 ,$$ while his utility from proposing $x_{m_l}$ would be $$\mathbb{E}(u_{\theta_{m_l}, x_{m_l}}) = -\frac{1}{2}\cdot (\theta_{m_l} - \theta_{m_r})^2 - \frac{1}{2}\cdot (\theta_{m_l} - x_{m_l})^2$$ if he changed his proposal only slightly enough that his proposal $x_{m_l}$ is still preferred by half of the voters. Else, his expected utility would become $$\mathbb{E}(u_{\theta_{m_l}, x_{m_l}}) = -(\theta_{m_l} - \theta_{m_r})^2 ,$$ since the right median's proposal would be implemented every time. In both cases, his expected utility decreased compared to proposing $\theta_{m_l}$. Hence, he has no incentive to change his proposal. For the left median's choice in the first stage, notice that his utility, if he switched to voting, would again be $$\mathbb{E}(u_{\theta_{m_l}, v}) = -(\theta_{m_l} - \theta_{m_r})^2 ,$$ as $\theta_{m_r}$ would always be implemented by being the only proposal made. Hence, the left median has no incentive to change his strategy. A similar argument applies to show that the right median has no incentive to change his strategy. Finally, the voters do not have an incentive to switch to proposal-making. 
    To show this, we first prove that no agent has an incentive to propose an alternative different from their own type. 

Consider an agent located to the left of the left median. Clearly, there is no incentive for such an agent to propose a point to the left of their type, as this would only reduce the proposal's appeal and lead to a lower expected utility. 

If the agent retains their voting power, their expected utility is
\[
\mathbb{E}(u_{\theta_i,v}) = -\frac{1}{2} \left((\theta_i - \theta_{m_l})^2 + (\theta_i - \theta_{m_r})^2\right).
\]
If the agent instead becomes a proposer and loses their vote, proposing, their expected utility becomes
\[
\mathbb{E}(u_{\theta_i, x_i}) = -\frac{1}{3} (\theta_i - \theta_{m_l})^2 - \frac{2}{3} (\theta_i - \theta_{m_r})^2.
\]
We now compare the two:
\[
\mathbb{E}(u_{\theta_i,v}) > \mathbb{E}(u_{\theta_i, x_i}) \quad \Leftrightarrow \quad \frac{1}{6} (\theta_i - \theta_{m_r})^2 > \frac{1}{6} (\theta_i - \theta_{m_l})^2 \quad \Leftrightarrow \quad |\theta_i - \theta_{m_r}| > |\theta_i - \theta_{m_l}|.
\]
This inequality holds by assumption, as we are considering an agent who is closer to the left median than to the right median. This proves that there is no incentive to propose for agents that are currently voting. A symmetric argument applies to agents located to the right of the median.

Altogether, we conclude that no agent has an incentive to deviate from their strategy. Hence, we have identified an equilibrium.
\end{proof}

\subsection{Proof of Proposition \ref{Proposition:Only-one}}
\begin{proof}
We assume that $N$ is odd. Suppose that there is an equilibrium with only one proposer $p$ while the remaining agents vote. We first prove that $x_p=\theta_p$. For this, assume by contradiction that $x_p \neq \theta_p$. However, the proposer can increase its utility by changing its proposal to $\widetilde{x}_p=\theta_p$. Since all other agents vote and no one else makes a proposal, the new proposal $\widetilde{x}_p$ is implemented. Thus, the proposer $p$ receives the highest utility of $0$. Next, we prove that the proposer $p$ has to be the median. Again, assume by contradiction that $p$ is not the median agent. Then, without loss of generality, assume $\theta_p < \theta_m$; in the other case, a symmetric argument applies. Now, the agent $p+1$, offering $\theta_{p+1}$, would win the voting and be better off than before; hence, it is not in equilibrium, leading to a contradiction. Hence, the only equilibrium with one proposer is the one in which the median agent proposes its own peak.
\end{proof}

\begin{figure}
\begin{center}
\begin{tikzpicture}[scale=1.5]
    % Draw the number line
    \draw[->] (-3,0) -- (3,0) node[right]{$\theta$};

    % Draw the points
    \filldraw (0,0) circle (1.5pt) node[below]{$\theta_m$};
    \filldraw (-2,0) circle (1.5pt) node[below]{$\theta_p$};
    \filldraw (-1,0) circle (1.5pt) node[below]{$\theta_{p+1}$};
    \filldraw (-0.75,0) circle (1.5pt) node[below]{};
    \filldraw (-0.5,0) circle (1.5pt) node[below]{};
    \filldraw (1,0) circle (1.5pt) node[below]{};
    \filldraw (0.75,0) circle (1.5pt) node[below]{};
    \filldraw (2.5,0) circle (1.5pt) node[below]{};
    \filldraw (1.75,0) circle (1.5pt) node[below]{};
    
    % Draw the labels
    \node at (0,0.2) {Median};
    \node at (-1.5,0.2) {};
    \node at (-0.75,0.2) {};

\end{tikzpicture}
\end{center}
\label{IllustrationOfContradictionProp1}
\caption{illustrates the contradiction in Proposition 1. If the proposer p is not the median voter, agent p+1 has an incentive to make a proposal, as this would win the vote and make them better off, leading to a contradiction.}
\end{figure}
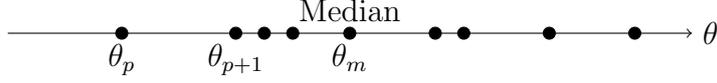

\subsection{Proof of Proposition \ref{proposition:no-two-proposers-N-odd}}

\subsubsection{Lemmata}

We start with two lemmas.

\begin{lemma}\label{lem:interval}
     Assume that there exists a unique median voter $mv$ whose type is $\theta_{mv}$. Then, for every equilibrium, there exists a corresponding equilibrium in which all proposals lie between the proposer's and the median voter's type. In other words, for all $p \in \mathcal{CP}$, $x_p \in [\theta_p, \theta_{mv}]$ if $\theta_p \leq \theta_{mv}$ and $x_p \in [\theta_{mv}, \theta_p]$ otherwise.
     % If there are a left and right median voter $\theta_{m_\ell}$ and $\theta_{m_r}$, then $x_p \in [\min\{\theta_p, \theta_{m_\ell}, \theta_{m_r}\}, \max\{\theta_p, \theta_{m_\ell}, \theta_{m_r}\}]$.
\end{lemma}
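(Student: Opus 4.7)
The plan is to argue that in any equilibrium, an out-of-interval proposal must be a dummy that never wins, and can therefore be replaced by an in-interval one without disturbing any player's expected payoff. Since $|\CV|$ is odd and $mv$ is unique, sincere voting yields the median-voter outcome: in a pairwise vote, the proposal strictly closer to $\theta_{mv}$ wins, and two equidistant proposals are each implemented with probability $1/2$ (the median voter abstains by the tie-break refinement and the remaining voters split symmetrically).

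The key claim is that in any equilibrium, a proposer $p$ whose $x_p$ lies outside $[\min(\theta_p,\theta_{mv}),\max(\theta_p,\theta_{mv})]$ has zero probability of winning. To show this I take $\theta_p\leq\theta_{mv}$ (the other orientation being symmetric) and construct an in-interval candidate deviation $x_p'$ as follows: if $x_p<\theta_p$, set $x_p':=\theta_p$; if $x_p>\theta_{mv}$ and the reflection $\tilde x_p:=2\theta_{mv}-x_p\geq\theta_p$, set $x_p':=\tilde x_p$; otherwise set $x_p':=\theta_p$. When $x_p'=\theta_p$, it is strictly closer to $\theta_{mv}$ than $x_p$, so $x_p'$ wins every vote $x_p$ wins and yields utility $0>u_{\theta_p,x_p}$ on winning. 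In the reflection subcase, $\tilde x_p$ has the same distance to $\theta_{mv}$ as $x_p$ (hence the same winning set), but $|\tilde x_p-\theta_p|<|x_p-\theta_p|$ by a direct computation from $\theta_p\leq\theta_{mv}<x_p$, so $p$'s utility on winning is strictly higher. In every case a positive winning probability of $x_p$ would make the deviation $x_p'$ strictly profitable, contradicting best-response; hence $p$ never wins.

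Replacing $x_p$ by $x_p'$ therefore leaves the distribution over implemented alternatives unchanged (neither $x_p$ nor, by the same best-response argument applied to $x_p'$, is ever implemented), so every expected utility is preserved and the modified profile is still an equilibrium. Moreover, at most one proposer can be out-of-interval in a single equilibrium, since two such proposers would each have positive winning probability in their direct pairwise match; hence a single replacement suffices. The main obstacle is the reflection subcase, where $\tilde x_p$ shares the winning set of $x_p$ and weak dominance alone fails: the argument must exploit the strict inequality $|\tilde x_p-\theta_p|<|x_p-\theta_p|$ to extract a strict payoff gain from any positive winning probability, which in turn forces the zero-winning-probability conclusion needed for the replacement to be outcome-preserving.
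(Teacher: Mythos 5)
Your proof is correct and rests on exactly the same two deviations as the paper's: reflecting a proposal that overshoots the median voter to $2\theta_{mv}-x_p$, and pulling a proposal that undershoots the peak back to $\theta_p$, in each case extracting a strict gain whenever the original proposal is implemented with positive probability. Your repackaging --- first concluding that any out-of-interval proposal must never win in equilibrium, then doing a single outcome-preserving replacement after observing that at most one proposer can be out-of-interval --- is a slightly tidier organization of the paper's ``either a contradiction or a new equilibrium'' case analysis, and it shares with the paper the glossed-over point that preserving on-path payoffs does not by itself guarantee that no agent acquires a newly profitable deviation in the modified profile.
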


\begin{proof}
     We can assume without loss of generality that $\theta_p < \theta_{mv}$; otherwise, the same arguments apply due to symmetry. Suppose that there is a proposer $p$ such that $x_p \notin [\theta_p, \theta_{mv}]$. We consider two cases.
    \begin{enumerate}[label=(\roman*)]
        \item If $\theta_{mv} < x_p$, switching to the proposal $\tilde{x}_p = 2\theta_{mv} - x_p$ implies
\begin{align*}
    \mathbb{E}(u_{\theta_p, x_p}) & = -\frac{1}{{(|\mathcal{CP}|-1)^2}} \left( \sum_{i \in \mathcal{CP} \setminus \{p\}} \left( (\theta_{p} - x_{w(i,p)})^2 + \sum_{j \in \mathcal{CP} \setminus \{p, i\}} (\theta_{p} - x_{w(i,j)})^2 \right) \right) \\
    & \leq -\frac{1}{{(|\mathcal{CP}|-1)^2}} \left( \sum_{i \in \mathcal{CP} \setminus \{p\}} \left( (\theta_{p} - \tilde{x}_{w(i,p)})^2 + \sum_{j \in \mathcal{CP} \setminus \{p, i\}} (\theta_{p} - x_{w(i,j)})^2 \right) \right) = \mathbb{E}(u_{\theta_p, 2\theta_{mv}-x_p}),
\end{align*}
where $w(i,j) = \argmin_{k \in \{i,j\}} |x_k - \theta_{mv}|$ is the index of the winning proposal if $x_i$ and $x_j$ are put to vote (where, by convention, we pick the proposer with smaller index if the minimum argument is not unique). There is a slight abuse of notation in the second equation, where $\tilde{x}_{w(i,p)}$ indicates the winning proposal if $\tilde{x}_p$ and $x_i$ are put to vote. The inequality holds because $\tilde{x}_p$ wins against the same proposals of $x_p$, since $|x_p - \theta_{mv}| = |\tilde{x}_p - \theta_{mv}|$. Whenever $x_p$ wins against at least one other proposal (i.e., it is not the most extreme proposal), the inequality is strict. Indeed, as a consequence of $\theta_p < \theta_{mv}$ and $\theta_{mv} < x_p$, we always have $|\theta_p - \tilde{x}_p| < x_p - \theta_p$. By ``most extreme proposal,'' we refer to the proposal that is farthest from the median voter type $\theta_{mv}$. This yields a contradiction, as proposer $p$ has an incentive to change the proposal, and hence this is not an equilibrium. Otherwise, if $x_p$ is the most extreme proposal, the expected utility remains the same for all agents, resulting in a new equilibrium where $x_p < \theta_{mv}$. However, this does not guarantee that $x_p \geq \theta_p$, which is the goal of the next part of the proof.

             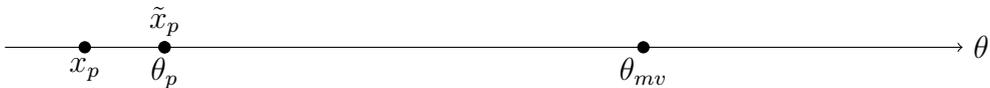
\begin{figure}[htbp]
    \centering
    \begin{tikzpicture}[scale=2.1, decorate, decoration={brace}]
        % Draw the number line
        \draw[->] (-3,0) -- (3,0) node[right]{$\theta$};

        % Draw the points
        \filldraw (1,0) circle (1pt) node[below]{$\theta_{mv}$};
        \filldraw (-2,0) circle (1pt) node[below]{$\theta_p$};
        \filldraw (2,0) circle (1pt) node[below]{$x_p$};
        \filldraw (0,0) circle (1pt) node[below]{$\tilde{x}_p$};

        % Draw the curly brackets
        \draw[decorate] (1,0.2) -- node[above=4pt] {$|x_p-\theta_{mv}|$} (2,0.2);
        \draw[decorate,decoration={brace, mirror}] (1,0.2) -- node[above=4pt] {$|\tilde{x}_p-\theta_{mv}|$} (0,0.2);
    \end{tikzpicture}
    \caption{Proposer $p$ has an incentive to move his proposal from $x_p$ to $\tilde{x}_p$.}
    \label{fig:number_line}
\end{figure}
        \item
If \( x_p < \theta_p \), switching to the proposal \( \tilde{x}_p = \theta_p \) leads to
\begin{align*}
    \mathbb{E}(u_{\theta_p, x_p}) & = -\frac{1}{{(|\mathcal{CP}|-1)^2}} \left( \sum_{i \in \mathcal{CP} \setminus \{p\}} \left( (\theta_{p} - x_{w(i,p)})^2 + \sum_{j \in \mathcal{CP} \setminus \{p, i\}} (\theta_{p} - x_{w(i,j)})^2 \right) \right)\\
    & \leq -\frac{1}{{(|\mathcal{CP}|-1)^2}} \left( \sum_{i \in \mathcal{CP} \setminus \{p\}} \left( (\theta_{p} - \tilde{x}_{w(i,p)})^2 + \sum_{j \in \mathcal{CP} \setminus \{p, i\}} (\theta_{p} - x_{w(i,j)})^2 \right) \right) = \mathbb{E}(u_{\theta_p, \theta_p}).
\end{align*}
The inequality holds since \( x_p \) can only win against proposals \( x \notin [\theta_p, 2\theta_{mv} - \theta_p] \), which are further away from \( \theta_{mv} \) than \( \theta_p \) itself. Therefore, the new proposal \( \tilde{x}_p \) leads to more utility for the proposer \( p \). For all other pairs of proposals, the outcome does not change. The inequality is strict if there is a proposal \( x_i \), \( i \neq p \), such that \( |x_i - \theta_{mv}| > | \theta_p - \theta_{mv}| \). In this case, this leads to a contradiction. Otherwise, we have found a new equilibrium where \( x_p \in [\theta_p, \theta_{mv}] \).

\begin{figure}[htbp]
\centering
\begin{tikzpicture}[scale=2.1, decorate, decoration={brace}]
    % Draw the number line
    \draw[->] (-3,0) -- (3,0) node[right]{$\theta$};

    % Draw the points
    \filldraw (1,0) circle (1pt) node[below]{$\theta_{mv}$};
    \filldraw (-2,0) circle (1pt) node[below]{$\theta_p$};
    \filldraw (-2.5,0) circle (1pt) node[below]{$x_p$};
    \filldraw (-2,0) circle (1pt) node[above]{$\tilde{x}_p$};

    % Draw the curly brackets
    
\end{tikzpicture}
\caption{Proposer \( p \) has an incentive to move his proposal from \( x_p \) to \( \tilde{x}_p \).}
\label{fig:number_line}
\end{figure}

We remark that the proofs of both points are complementary. Indeed, in the first point, we proved that if \( \theta_{mv} < x_p \), then either we reach a contradiction or we can find a new equilibrium where \( x_p < \theta_{mv} \). This alone does not guarantee that \( \theta_{p} \leq x_p \). If this is not true, then the second point shows how to find another equilibrium for which this holds. We emphasize this as it is important to provide a complete proof.
\end{enumerate}

\end{proof}

\begin{lemma}\label{lem:peak}
    Assume again that there exists a unique median voter $mv$ whose type is $\theta_{mv}$. Furthermore, the equilibrium is such that all proposals satisfy $x_p \in [\min\{\theta_p, \theta_{mv}\}, \max\{\theta_p, \theta_{mv}\}]$ and no two proposals coincide, that is, $x_p\neq x_q$ for all $p, q \in \mathcal{CP}$. Then $x_p=\theta_p$ for all proposals.
\end{lemma}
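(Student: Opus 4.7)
The strategy is a local perturbation argument. Fix an arbitrary proposer $p \in \CP$ and, by left--right symmetry, assume $\theta_p \le \theta_{mv}$, so that the hypothesis places $x_p \in [\theta_p, \theta_{mv}]$. If $\theta_p = \theta_{mv}$ the interval is a singleton and the claim is immediate, so I assume $\theta_p < \theta_{mv}$ and, for contradiction, $x_p > \theta_p$. The goal is to exhibit a strictly profitable deviation from $x_p$, contradicting equilibrium.

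Consider the perturbation $\tilde{x}_p := x_p - \epsilon$ with $\epsilon > 0$ small. The winner of any pair $(p,q)$ is determined by which of $x_p,x_q$ is closer to $\theta_{mv}$, so the identity of the winner can flip only at the two thresholds $x_p = x_q$ or $x_p = 2\theta_{mv} - x_q$. Since the proposals $\{x_q : q \neq p\}$ are pairwise distinct and finitely many, $x_p$ is strictly bounded away from all these thresholds; choosing $\epsilon$ smaller than the minimum such gap, no threshold is crossed and the pairwise winning structure is preserved. Pairs not involving $p$, and pairs in which $p$ loses, therefore have unchanged outcomes, and only the outcomes of the pairs in which $p$ wins move from $x_p$ to $\tilde{x}_p$.

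For each such winning pair, $(\theta_p - x_p + \epsilon)^2 - (\theta_p - x_p)^2 = 2\epsilon(\theta_p - x_p) + \epsilon^2$; since $x_p > \theta_p$ the linear term $2\epsilon(\theta_p - x_p)$ is strictly negative and dominates $\epsilon^2$ for small $\epsilon$, so $p$'s utility strictly increases in every winning pair. Hence, as soon as $p$ wins at least one pair, the deviation strictly improves $p$'s expected utility, contradicting equilibrium and forcing $x_p = \theta_p$.

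Two boundary configurations remain, and I expect them to be the trickiest step. (i) If $p$ wins no pair at all, then $x_p$ does not affect any outcome, so $p$---and in fact every other agent---is indifferent to the value of $x_p$; we may relabel the equilibrium by setting $x_p := \theta_p$ without altering any payoff, in the same spirit as the ``corresponding equilibrium'' construction of Lemma~\ref{lem:interval}. (ii) If $x_p$ sits exactly at a tie threshold $x_p = 2\theta_{mv} - x_q$ for some $q$ on the opposite side of $\theta_{mv}$, then the downward perturbation converts a $50/50$ tie into a strict loss against the strictly worse alternative $x_q$ (note $|x_q - \theta_p| > |x_p - \theta_p|$ since $x_q > \theta_{mv} \ge x_p > \theta_p$), producing a first-order loss that may outweigh the $O(\epsilon)$ gains above; in that case I would instead use the symmetric upward perturbation $\tilde{x}_p = x_p + \epsilon$, which converts the tie into a strict win for $p$ and yields an $O(1)$ improvement dominating any $O(\epsilon)$ losses elsewhere. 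In every remaining configuration some direction of small perturbation strictly improves $p$, so in equilibrium $x_p = \theta_p$ for all proposers.
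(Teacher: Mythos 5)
Your proof is correct and follows essentially the same route as the paper's: rule out proposals tied in distance to $\theta_{mv}$ by a small move toward the median (an $O(1)$ gain from breaking the tie), and otherwise perturb a proposal toward its proposer's peak by an $\epsilon$ small enough to preserve every pairwise winner, which strictly improves the outcome in each pair that proposer wins. You are in fact more explicit than the paper on the one degenerate configuration --- a proposal that wins no pair, for which only a payoff-equivalent relabeling to $x_p=\theta_p$ (rather than the literal equality) can be obtained; the paper's closing ``in the same way, we can show that $x_p=\theta_p$ for any $p$'' glosses over exactly this point.
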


\begin{proof}
    We denote by $p_1, \dots, p_s$ the $s$ agents making a proposal. We start by noticing that there cannot exist $p_i \neq p_j$ such that $|\theta_{mv} - x_{p_i}|=|\theta_{mv} - x_{p_j}|$ and $x_{p_i} \neq x_{p_j}$. If this were the case, then both the proposer $p_i$ and $p_j$ could profit by deviating from their current strategy, moving their proposal by a small amount $\epsilon >0$ towards the median. This ensures that their proposal wins if the two proposals are put to vote against each other.
    
    Therefore, there exists a unique proposal $x_{p_c}=\argmin_{x_{p} \in \{x_{p_1},\dots, x_{p_s}\} } |x_p - \theta_{mv}|$ that is closest to the median. We assume by contradiction that $x_{p_c} \neq \theta_{p_c}$. Introduce the second closest proposal $x_{p_{2c}}= \argmin_{x_{p} \in \{x_{p_1},\dots, x_{p_s}\} \setminus x_{p_c} } |x_p - \theta_{mv}|$. Since $|x_{p_c} - \theta_{mv}| < | x_{p_{2c}} - \theta_{mv}|$ by assumption, the proposer $p_c$ profits by moving its proposal $x_{p_c}$ by a small amount $\epsilon < | x_{p_{2c}} - \theta_{mv}| - | x_{p_{c}} - \theta_{mv}|$ towards its peak $\theta_{p_c}$. This does not change the outcome of the vote, but increases the utility of $p_c$ since $\widetilde{x}_{p_c}=x_{p_c}+\epsilon \cdot \text{sgn}( \theta_{p_c} - x_{p_c})$ wins against all other proposals. This is a contradiction, hence $x_{p_c}=\theta_{p_c}$.

    In the same way, we can show that $x_p=\theta_p$ for any $p \in \mathcal{CP}$.
\end{proof}

\subsubsection{Proof of Proposition \ref{proposition:no-two-proposers-N-odd}}

Suppose by contradiction that there exists an equilibrium with exactly two proposers $p_1$ and $p_2$, and assume w.l.o.g. that $p_1 < p_2$. 
    
    If $x_{p_1}=x_{p_2}$, the result of the voting is $x_{p_1}$. Hence, $p_1$ will prefer to vote instead of making a proposal by the tie-breaking rule, as voting will lead to the same utility since the outcome remains $x_{p_1}$. This is a contradiction. Thus, we can assume that $x_{p_1}\neq x_{p_2}$.
    
    Moreover, if the peak of the median voter is proposed, i.e. $x_{p_1}=\theta_{mv}$ or $x_{p_2}=\theta_{mv}$, the other proposer cannot make any proposal that beats the median voter's proposal. Hence, choosing not to propose and to vote instead yields the same welfare. By the tie-breaking rule, voting is preferred, and therefore, there is no equilibrium where the median $\theta_{mv}$ is proposed. We distinguish three cases:
    % \begin{enumerate}[label=(\roman*)]
    %     \item If $|x_{p_1} - \theta_{m}| = |x_{p_2} - \theta_{m} |$, both of the proposal are selected with probability $\frac{1}{2}$. Hence, the utility of $p_1$ is
    %     \begin{equation*}
    %         \mathbb{E}(u_{p_1, x_{p_1}}) = - \frac{1}{2} \left( (x_{p_1} - \theta_{p_1})^2 + (x_{p_2} - \theta_{p_2})^2 \right).
    %     \end{equation*}
    % \end{enumerate}

    \begin{enumerate}[label=(\roman*)]
        \item 
            Assume that $\theta_{p_1}< \theta_{p_2} < \theta_{mv}$. If $x_{p_1} < \theta_{p_1}$, moving the proposal to $\theta_{p_1}$ brings it closer to the median $\theta_{mv}$ and therefore increases its chance of winning against $x_{p_2}$. Moreover, as the distance to its own type $\theta_{p_1}$ is zero, this can only improve the welfare of $p_1$. Hence, this either yields a contradiction, and we are done, or we obtain a second equilibrium where $x_{p_1}\geq\theta_{p_1}$. The same applies to the case $x_{p_2}< \theta_{p_2}$. Thus, we can assume that
            \begin{equation*}
                \theta_{p_1} \leq x_{p_1} \text{ and } \theta_{p_2} \leq x_{p_2}.
            \end{equation*}
            Moreover, from \Cref{lem:interval} we know that $x_{p_1} \in [ \theta_{p_1}, \theta_{mv}]$ and $x_{p_2} \in [ \theta_{p_2}, \theta_{mv}]$.  If $\theta_{p_2} \leq x_{p_2} < x_{p_1} < \theta_{mv}$, the proposal $x_{p_1}$ wins the vote and is implemented. However, if $p_1$ decides to vote instead, $x_{p_2}$ is implemented, which is closer to $\theta_{p_1}$ and therefore yields more utility for $p_1$, a contradiction. If instead $|x_{p_2} - \theta_{mv}| < | x_{p_1} - \theta_{mv}|$, $x_{p_2}$ will win and be implemented. For $p_1$ to win the vote with $x_{p_1}$, $p_1$ has to move the policy at least to $x_{p_2}$. However, this leads to the same utility as not proposing or less. Hence, the best strategy for $p_1$ is to vote, which yields the same utility as making a proposal. There is no equilibrium with such an ordering of types.
        \item 
            Assume that $\theta_{p_1}< \theta_{mv} < \theta_{p_2}$. We consider the proposer whose proposal is further away from the median voter $\theta_{mv}$. Assume without loss of generality that $|x_{p_1} - \theta_{mv}| > | x_{p_2} - \theta_{mv}|$ and thus $x_{p_2}$ is implemented as it wins the vote. It is important to note that at equilibrium, we must have $\theta_{mv} < x_{p_2}$. If this were not the case, then agent $p_2$ could increase their utility by deviating to the symmetric point $2\theta_{mv} - x_{p_2}$, which preserves the probability of winning the vote but yields a strictly higher utility. Now, if $p_1$ were to propose $\tilde{x}_{p_1}=2\theta_{mv} - x_{p_2} + \epsilon$ for some $0 < \epsilon < | x_{p_2} - \theta_{mv} |$ (recall that $x_{p_2} \neq \theta_{mv}$), then
            \begin{equation*}
                \mathbb{E}(u_{\theta_{p_1}, \tilde{x}_{p_1}})= -(\tilde{x}_{p_1} - \theta_{p_1})^2 > -(x_{p_2} - \theta_{p_1})^2 =\mathbb{E}(u_{\theta_{p_1}, x_{p_1}}).
            \end{equation*}
            Hence, $p_1$ can improve its welfare by changing the proposal, and therefore, this cannot be an equilibrium.
            
            \begin{minipage}{\linewidth}
                    \vspace{10pt}
                    \centering
                    \begin{tikzpicture}
                        \draw (0,0) -- (10,0);
                        \draw (1,-0.2) -- (1,0.2); \node at (1, -0.5) {$\theta_{p_1}$};
                        \draw (2,-0.2) -- (2,0.2); \node at (2, -0.5) {$x_{p_1}$};
                         \draw (3.5,-0.2) -- (3.5,0.2); \node at (3.5, -0.5) {$\tilde{x}_{p_1}$};
                        \draw [->] (2, 0.1) to [bend left] (3.5, 0.1);
                        \draw (5,-0.2) -- (5,0.2); \node at (5, -0.5) {$\theta_{mv}$};
                        \draw (7,-0.2) -- (7,0.2); \node at (7, -0.5) {$x_{p_2}$};
                        \draw (9,-0.2) -- (9,0.2); \node at (9, -0.5) {$\theta_{p_2}$};
                    \end{tikzpicture}
                    
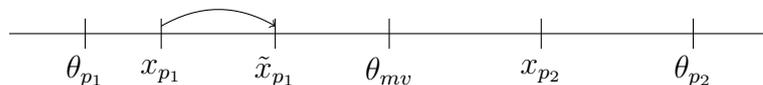
\captionof{figure}{Welfare can be improved by moving one's own proposal towards the median, ensuring it is implemented.}
                    \label{fig:n=2_deviation}
                    \vspace{15pt}
            \end{minipage}
        \item 
            Assume that $\theta_{mv} < \theta_{p_1} < \theta_{p_2}$. This situation is symmetric to case $(i)$ and thus the same arguments apply.
    \end{enumerate}

\subsection{Proof of Theorem \ref{theorem:Uniqueness_of_N_odd}}
The goal of this section is to demonstrate that in the PoV procedure, when the number of agents $N$ is odd, the only equilibrium is the one characterized in Theorem \ref{theorem:vop-theorem}.

\subsubsection{No equilibrium with $N$ proposals}
We begin with the following intermediate result:

\begin{proposition}\label{Proposition:No_eq_everyone_proposing}
    Given an odd number of agents $N$, there is no equilibrium in which every agent proposes.
\end{proposition}

\begin{proof}
Suppose, by contradiction, that there exists an equilibrium in which every agent is proposing.

Assume that in this equilibrium, each agent $i \in [N]$ of type $\theta_i$ proposes policy $x_i$. If there exists an agent $j$ such that $x_j \neq \theta_j$, then agent $j$ could increase his/her utility by instead proposing $x_j = \theta_j$. Given that all other agents are also proposing, this deviation would strictly improve his/her expected utility. Therefore, it must be that in this equilibrium, $x_i = \theta_i$ for all $i \in [N]$.

Under this strategy profile, the utility of agent $k$ is:
\[
\mathbb{E}(u_{\theta_{k}, \theta_k}) = -\frac{1}{N} \sum_{i=1}^N (\theta_k - \theta_i)^2.
\]

Since this is presumed to be an equilibrium, no agent should have a profitable deviation. We have already shown that no deviation in the proposals improves utility. We now consider the possibility of deviating by choosing to vote instead.

Suppose agent $k$ deviates by voting. His or her expected utility in that case would be:
\[
\mathbb{E}(u_{\theta_{k}, v}) = -\frac{1}{(N-1)(N-2)} \sum_{i\neq k} \sum_{j\notin\{i,k\}} (\theta_k - \theta_{w\{i,j\}})^2,
\]
where $w\{i,j\} = \argmin_{s\in\{i,j\}}|\theta_k - \theta_s|$ (choosing the smaller index again if it is not unique).

Because all other agents are proposing and the deviating agent gets to vote between two uniformly chosen proposals, he/she will always choose the one closer to their own type.

For voting not to be a profitable deviation, it must hold for every agent $k$ that:
\[
-\frac{1}{N} \sum_{i=1}^N (\theta_k - \theta_i)^2
>
-\frac{1}{(N-1)(N-2)} \sum_{i\neq k} \sum_{j\notin\{i,k\}} (\theta_k - \theta_{w\{i,j\}})^2.
\]

Assume without loss of generality that $|\theta_1 - \theta_m| \leq |\theta_N - \theta_m|$. That is, between the two extreme agents, agent $1$ is at most as far from the median as agent $N$. We will now prove that agent $1$ has a strict incentive to deviate by voting.

Specifically, we show that:
\[
\begin{aligned}
    \mathbb{E}(u_{\theta_{1}, \theta_1}) = -\frac{1}{N} \sum_{i=2}^N (\theta_1 - \theta_i)^2 &< -\frac{1}{\binom{N-1}{2}} \sum_{i = 2}^{N-1} \sum_{j=i+1}^{N} (\theta_1 - \theta_i)^2 = \mathbb{E}(u_{\theta_{1}, v}) \\
    \iff -\frac{1}{N} \sum_{i=2}^N (\theta_1 - \theta_i)^2 &< -\frac{1}{\binom{N-1}{2}} \sum_{i = 2}^{N-1} (N - i) (\theta_1 - \theta_i)^2.
\end{aligned}
\]

The second equality holds since agent $1$ will always choose the proposal closest to their type.

We now simplify the inequality to a more tractable form:
\[
\begin{aligned}
-\frac{1}{N} \sum_{i=2}^N (\theta_1 - \theta_i)^2 
&< -\frac{1}{\binom{N-1}{2}} \sum_{i=2}^{N-1} (N - i) (\theta_1 - \theta_i)^2 \\
\iff \frac{1}{N} \sum_{i=2}^N (\theta_1 - \theta_i)^2 
&> \frac{1}{\binom{N-1}{2}} \sum_{i=2}^{N-1} (N - i) (\theta_1 - \theta_i)^2 \\
\iff (N-1)(N-2) \sum_{i=2}^N (\theta_1 - \theta_i)^2 
&> 2N \sum_{i=2}^{N-1} (N - i) (\theta_1 - \theta_i)^2 \\
\iff (N^2 - 3N + 2) \sum_{i=2}^N (\theta_1 - \theta_i)^2 
&> 2N \sum_{i=2}^{N-1} (N - i) (\theta_1 - \theta_i)^2.
\end{aligned}
\]

Expanding the right-hand side and regrouping:
\[
\begin{aligned}
(N^2 - 3N + 2)(\theta_1 - \theta_N)^2 &+ \sum_{i=2}^{N-1} \left[ (N^2 - 3N + 2) - 2N(N - i) \right](\theta_1 - \theta_i)^2 \\
&= (N^2 - 3N + 2)(\theta_1 - \theta_N)^2 + \sum_{i=2}^{N-1} c_i(\theta_1 - \theta_i)^2 > 0,
\end{aligned}
\]
where \( c_i := 2N\left(i - \frac{N+3}{2}\right) + 2 \).

We now prove that the inequality holds. To do so, we first analyze the coefficients  
$c_i$. 
We begin by determining the value of \( i \) for which \( c_i = 0 \). Solving,
\[
c_i = 0 \iff 2N\left(i - \frac{N+3}{2}\right) + 2 = 0 \iff i = \frac{N+3}{2} - \frac{1}{N}.
\]
From this, we can deduce the sign of \( c_i \) based on the value of \( i \):
\[
\begin{cases}
c_i < 0 & \text{if } i < \frac{N+3}{2} - \frac{1}{N}, \\
c_i > 0 & \text{if } i > \frac{N+3}{2} - \frac{1}{N}.
\end{cases}
\]

Since \( N \) is assumed to be odd, the expression \( \frac{N+3}{2} - \frac{1}{N} \) lies strictly between two consecutive integers. Specifically, observe that
\[
\frac{N+3}{2} - 1 = \frac{N+1}{2} = m.
\]
Therefore, the sign conditions for \( c_i \) can be equivalently written as:
\[
\begin{cases}
c_i < 0 & \text{if } i \leq m, \\
c_i > 0 & \text{if } i \geq m+1.
\end{cases}
\]

With this information about the sign of the coefficients \( c_i \), we can now rewrite the expression as:
\begin{equation}\label{result}
(N^2 - 3N + 2)(\theta_1 - \theta_N)^2 
+ \sum_{i=2}^{m} c_i(\theta_1 - \theta_i)^2 
+ \sum_{i=m+1}^{N-1} c_i(\theta_1 - \theta_i)^2 > 0.
\end{equation}
At this point, we make the following observations:
\begin{itemize}
    \item Since \( |\theta_1 - \theta_m| \leq |\theta_N - \theta_m| \), it follows that
    \[
    (\theta_1 - \theta_N)^2 \geq 4(\theta_1 - \theta_m)^2,
    \]
    and therefore,
    \[
    (N^2 - 3N + 2)(\theta_1 - \theta_N)^2 \geq 4(N^2 - 3N + 2)(\theta_1 - \theta_m)^2.
    \]
    
    \item For all \( i \in \{2, \dots, m\} \), we have \( |\theta_1 - \theta_i| \leq |\theta_1 - \theta_m| \), and since \( c_i < 0 \), this implies
    \[
    \sum_{i=2}^{m} c_i (\theta_1 - \theta_i)^2 > \sum_{i=2}^{m} c_i (\theta_1 - \theta_m)^2.
    \]

    \item For all \( i \in \{m+1, \dots, N-1\} \), we have \( |\theta_1 - \theta_i| > |\theta_1 - \theta_m| \), and since \( c_i > 0 \), this implies
    \[
    \sum_{i=m+1}^{N-1} c_i (\theta_1 - \theta_i)^2 > \sum_{i=m+1}^{N-1} c_i (\theta_1 - \theta_m)^2.
    \]
\end{itemize}

Thanks to these observations, if we can show that
\begin{equation}\label{equation:implyresult}
    4(N^2 - 3N + 2) + \sum_{i=2}^{m} c_i + \sum_{i=m+1}^{N-1} c_i > 0,
\end{equation}
then this will imply inequality~\eqref{result}.

We now proceed to prove inequality~\eqref{equation:implyresult}, which can be rewritten as:
\[
\sum_{i=2}^{m} \left[2N\left(i - \frac{N+3}{2}\right) + 2\right] 
+ \sum_{i=m+1}^{N-1} \left[2N\left(i - \frac{N+3}{2}\right) + 2\right] 
+ 4(N - 1)(N - 2) > 0.
\]

Let \( N \in \mathbb{N} \) be an odd integer with \( N \geq 3 \), and remember \( m := \frac{N+1}{2} \). Observe that the two sums together cover the full index range from \( i=2 \) to \( i=N-1 \). Hence, the expression simplifies to:
\[
\sum_{i=2}^{N-1} \left[ 2N\left(i - \frac{N+3}{2}\right) + 2 \right] + 4(N-1)(N-2).
\]

Let us compute the sum:
\begin{align*}
\sum_{i=2}^{N-1} \left[ 2N\left(i - \frac{N+3}{2}\right) + 2 \right]
&= \sum_{i=2}^{N-1} \left( 2Ni - N(N+3) + 2 \right) \\
&= \sum_{i=2}^{N-1} 2Ni - (N-2)N(N+3) + 2(N - 2),
\end{align*}
since there are \( N - 2 \) terms in the sum.

The partial sum \( \sum_{i=2}^{N-1} i \) is:
\[
\sum_{i=2}^{N-1} i = \sum_{i=1}^{N-1} i - 1 = \frac{(N-1)N}{2} - 1.
\]
Therefore,
\begin{align*}
\sum_{i=2}^{N-1} 2Ni &= 2N \left( \frac{(N-1)N}{2} - 1 \right) = N^2(N-1) - 2N, \\
\text{and thus, } \sum_{i=2}^{N-1} \left[ 2N\left(i - \frac{N+3}{2}\right) + 2 \right]
&= N^2(N-1) - 2N - (N-2)N(N+3) + 2(N - 2). \\
&=  (N^3 - N^2) - 2N - (N^3 + N^2 - 6N) + 2N - 4 \\
&= -2N^2 + 6N - 4.
\end{align*}
Now add the remaining term:
\[
4(N-1)(N-2) = 4(N^2 - 3N + 2) = 4N^2 - 12N + 8.
\]
Hence, the total expression becomes:
\begin{align*}
-2N^2 + 6N - 4 + 4N^2 - 12N + 8 &= 2N^2 - 6N + 4.
\end{align*}
We now show that:
\[
2N^2 - 6N + 4 > 0 \quad \text{for odd } N \geq 3.
\]
Indeed,
\[
N^2 - 3N + 2 = (N-2)(N-1) > 0 \quad \text{for odd } N \geq 3.
\]
Thus, the expression is strictly positive for all odd \( N \geq 3 \), and the inequality is proven.

We can then conclude that, since agent $1$ has an incentive to deviate to voting, this cannot be an equilibrium. This proves the Proposition. 
\end{proof}

It is important to notice that the assumption that \( N \) is odd is necessary in order to guarantee that the threshold at which the coefficient \( c_i \) changes sign occurs exactly at the median index \( m \). On the other hand, it is also crucial to choose the most extreme agent who is \emph{closest} to the median agent. Indeed, observe that one can construct a counterexample where the most extreme agent does not have an incentive to deviate to voting, if it is not the extreme agent closest to the median. 

We propose a simple example of such a scenario here.

Consider three agents with types \( 0 \), \( 9 \), and \( 10 \). We know that in an equilibrium in which everyone is proposing, each agent has to propose their own type. Then, the agent of type \( 0 \) has no incentive to deviate to voting. His expected utility from proposing is given by:
\[
-\frac{1}{3}(81 + 100)\approx -60,
\]
since the two squared distances from the median (which is 9) are \( 81 = (0 - 9)^2 \) and \( 100 = (0 - 10)^2 \). On the other hand, if he deviates and votes (i.e., votes 9), he would obtain utility:
\[
-(0 - 9)^2 = -81
\]
with certainty. This deviation is not beneficial, as:
\[
-81 < -60.
\]
However, this profile is not an equilibrium either, since the agent of type \( 10 \) does have an incentive to deviate to voting: his utility would improve from $-\frac{1}{3}(81 + 100) = -\frac{181}{3} $ to $-(10 - 9)^2 = -1.$

Observe that this result, combined with the findings of Theorem \ref{theorem:vop-theorem} and Proposition \ref{proposition:no-two-proposers-N-odd}, ensures that when \( N = 3 \), the equilibrium described in Theorem \ref{theorem:vop-theorem} is unique. Specifically, we have demonstrated that no equilibrium exists with three proposers. Additionally, Proposition \ref{proposition:no-two-proposers-N-odd} confirms the absence of an equilibrium with two proposers. Finally, with only one proposer, the sole feasible equilibrium is the one characterized by Theorem \ref{theorem:vop-theorem}. This confirms uniqueness in this particular scenario.

\subsubsection{Proofs for \(N\)-\(k\) Proposals}

We now proceed with the formal proof of the uniqueness of the equilibrium as stated in Theorem \ref{theorem:vop-theorem}. After establishing in Proposition \ref{Proposition:No_eq_everyone_proposing} that no equilibrium exists where everyone is proposing, we extend this result to show that there is also no equilibrium with \(N-1\) agents proposing. Specifically, we prove the following proposition:

\begin{proposition}\label{proposition:no_eq_N-1_Proposing}
Given an odd number of agents \(N\), there is no equilibrium in which \(N-1\) agents are proposing.
\end{proposition}

\begin{proof}
To prove this statement, it is necessary to show that for every strategy profile in which \(N-1\) agents are proposing, there exists at least one agent who has a profitable deviation from their current strategy.

We proceed by contradiction. Assume that there exists an equilibrium where \(N-1\) agents are proposing. We will show that this leads to a contradiction.

Let \(\mathbf{v}\) be the index of the agent who is voting, i.e., the only agent who is not proposing. All other agents \(i \in [N] \setminus \{\mathbf{v}\}\) are proposing a policy \(x_i\).

Notice that at this equilibrium, the utility realized by the voter \(\mathbf{v}\) is given by:

\[
\mathbb{E}(u_{\theta_\mathbf{v}, v}) = -\frac{1}{(N-1)(N-2)} \sum_{i \neq \mathbf{v}} \sum_{j \notin \{i, \mathbf{v}\}} (\theta_\mathbf{v} - x_{w\{i,j\}})^2,
\]

where \(w\{i,j\} = \argmin_{s \in \{i,j\}} |\theta_\mathbf{v} - x_s|\), where we choose the smaller index if necessary. Essentially, in this scenario, two proposals are randomly selected from all the proposals made by the other \(N-1\) agents, and then the voting agent \(\mathbf{v}\) chooses the one that is closest to his type.

On the other hand, the utility of an agent \(k \in [N] \setminus \{\mathbf{v}\}\) is given by:

\[
\mathbb{E}(u_{\theta_k, x_k}) = -\frac{1}{(N-1)(N-2)} \sum_{i \neq \mathbf{v}} \sum_{j \notin \{i, \mathbf{v}\}} (\theta_k - x_{w\{i,j\}})^2,
\]

where \(w\{i,j\} = \argmin_{s \in \{i,j\}} |\theta_\mathbf{v} - x_s|\). In this case, agent \(k\) proposes \(x_k\), and then two proposals are selected uniformly at random from all the proposals, and these two are put to a vote. Since \(\mathbf{v}\) is the only voter, he chooses the policy that is closest to his type.
Consider now the multiset of offers
\[
\{x_i : i \in [N] \setminus \{\mathbf{v}\}\}.
\]

Note that we are using the concept of a multiset here (for more details on this concept, see the proof of Proposition \ref{proposition:no_eq_odd_voting}). This will allow us to store single offers with a multiplicity greater than 1. If there exists just one offer that is the farthest from \(\theta_{\mathbf{v}}\), then, regardless of which proposal it is compared against, this proposal will always lose. Specifically, the agent making such a proposal has at least the same utility from voting, as his proposal is never chosen. In the worst case, his vote does not influence the decision. Consequently, he has an incentive to vote, given that if an agent is indifferent between voting and proposing, he will choose to vote.

This proves that there is no single proposal that is the worst. Hence, we know that there are at least two worst proposals. By definition, these must be equidistant from the type of the voter. Furthermore, at equilibrium, there cannot be just two identical worst proposals. Indeed, if there are just two identical worst proposals, they will always lose against every other proposal except when compared against each other. If these proposals are identical, then the outcome remains unchanged, and switching to voting would not decrease their utility. This proves that, at equilibrium, there are at least two worst proposals equidistant from the type of the voter and symmetric with respect to his/her type.

\begin{center}
\begin{tikzpicture}

% Draw the line
\draw[->] (0,0) -- (14,0);

% Draw the agents
\filldraw (2,0) circle (3pt) node[above]{$x_{worst_1}$};
\filldraw (7,0) circle (3pt) node[above]{$\theta_{\mathbf{v}}$};
\filldraw (12,0) circle (3pt) node[above]{$x_{worst_2}$};

% Draw the curly brace below the line
\draw[decorate, decoration={brace, mirror, amplitude=10pt}, yshift=-5pt]
(2,0) -- node[below=12pt] {All other proposals \( x_i \) lie within this interval.
} (12,0);

\end{tikzpicture}
\end{center}

Let us denote the agents making two of the worst offers as \( \text{worst}_1 \) and \( \text{worst}_2 \). Without loss of generality, we assume that \( x_{\text{worst}_1} < x_{\text{worst}_2} \), as depicted in the figure.

For this scenario to represent an equilibrium, the following conditions must hold:
\begin{enumerate}
    \item \( \theta_{\text{worst}_1} \leq x_{\text{worst}_1} \leq \theta_{m} \);
    \item \( \theta_{\text{worst}_2} \geq x_{\text{worst}_2} \geq \theta_{m} \).
\end{enumerate}

These conditions are a consequence of Lemma \ref{lem:interval}. For completeness, we provide a short proof here. They can be proven in an identical manner, so we will prove only one of them, specifically \( \theta_{\text{worst}_1} \leq x_{\text{worst}_1} \).

Assume, for contradiction, that this is not the case. If \( \theta_{\text{worst}_1} > x_{\text{worst}_1} \), then agent \( \text{worst}_1 \) could improve his expected utility by moving his proposal to \( \theta_{\text{worst}_1} \). This adjustment would increase his chances of winning and potentially prevent his proposal from being the worst, as it would be closer to the voter's type. Additionally, winning would yield a higher utility since the proposal aligns more closely with his preferences. The existence of this profitable deviation contradicts the assumption of equilibrium.

Notice that these two facts guarantee that there is no equilibrium where the voter is either agent \( 1 \) or agent \( N \). In both cases, the voter would lack either a predecessor or a successor. Hence, this proves that there is no equilibrium with the first or last agent being the only voter.

After these observations we are at the point where the situation is like the one depicted in the following picture.
\begin{center}
\begin{tikzpicture}

% Draw the line
\draw[->] (0,0) -- (14,0);

% Draw the agents
\filldraw (0.5,0) circle (3pt) node[above]{$\theta_{worst_1}$};
\filldraw (2,0) circle (3pt) node[above]{$x_{worst_1}$};
\filldraw (7,0) circle (3pt) node[above]{$\theta_{\mathbf{v}}$};
\filldraw (12,0) circle (3pt) node[above]{$x_{worst_2}$};
\filldraw (13.5,0) circle (3pt) node[above]{$\theta_{worst_2}$};

% Draw the curly brace below the line
\draw[decorate, decoration={brace, mirror, amplitude=10pt}, yshift=-5pt]
(2,0) -- node[below=12pt] {All other proposals \( x_i \) lie within this interval.
} (12,0);

\end{tikzpicture}
\end{center}

We now prove that this scenario cannot represent an equilibrium. Specifically, agent \(\text{worst}_1\) has an incentive to propose \(x_{\text{worst}_1} + \varepsilon\) for a sufficiently small \(\varepsilon > 0\). This adjustment makes his proposal better than \(x_{\text{worst}_2}\) but still worse than the other proposals against which it was already losing, given that $\varepsilon$ is sufficiently small.

Now, whenever this new policy is put to a vote against the other policies, different from the worst ones, it will still lose. Therefore, the part of the expected utility for these combinations does not change for agent \(\text{worst}_1\). However, when his proposal is voted against that of \(\text{worst}_2\), it now wins the vote and this policy is more favorable than the random choice previously made between \(x_{\text{worst}_1}\) and \(x_{\text{worst}_2}\). This can be seen as paying a small price \(\varepsilon\) to avoid the possibility that \(x_{\text{worst}_2}\) could be implemented. This results in a profitable deviation for agent \(\text{worst}_1\), reaching a contradiction. 

Thus, we have shown that in any configuration where \(N-1\) agents are proposing, there exists at least one agent with an incentive to deviate from the proposed strategy. This deviation undermines the stability of the configuration, leading to a contradiction with the assumption of equilibrium. Therefore, no equilibrium exists under the given conditions, completing the proof.

\end{proof}

We now proceed with the next step toward establishing the uniqueness of the equilibrium stated in Theorem~\ref{theorem:vop-theorem}. Specifically, we prove the following:

\begin{proposition}\label{proposition:no_eq_odd_voting}
Let \(N\) and \(k\) be odd integers, $k < N$. Then, there exists no equilibrium in which exactly \(N-k\) agents are proposing.
\end{proposition}

\begin{proof}
The case \(k = 1\) was established in the previous proposition. The argument here extends the same core reasoning.

Assume, for contradiction, that there exists an equilibrium in which \(N-k\) agents are proposing. Then, \(k\) agents are voting. Denote the set of voting agents by \(\mathcal{CV}\), and let their types be ordered as
\[
\theta_{\mathbf{v}_1} < \theta_{\mathbf{v}_2} < \dots < \theta_{\mathbf{v}_k}.
\]
Since \(k\) is odd, the median voter is uniquely defined and his type is denoted as \(\theta_{\mathbf{v}_m}\).

\begin{center}
\begin{tikzpicture}
\draw[->] (0,0) -- (14,0);
\filldraw (4,0) circle (3pt) node[above]{$\theta_{\mathbf{v}_1}$};
\filldraw (4.5,0) circle (3pt) node[above]{$\theta_{\mathbf{v}_2}$};
\filldraw (6,0) circle (0pt) node[above]{$\dots$};

\filldraw (7,0) circle (3pt) node[above]{$\theta_{\mathbf{v}_m}$};
\filldraw (8,0) circle (0pt) node[above]{$\dots$};

\filldraw (10,0) circle (3pt) node[above]{$\theta_{\mathbf{v}_{k-1}}$};
\filldraw (13,0) circle (3pt) node[above]{$\theta_{\mathbf{v}_k}$};
\end{tikzpicture}
\end{center}

Let \(\mathcal{P} = \{x_i \mid i \in [N] \setminus \mathcal{CV} \}\) be the \emph{multiset} of proposals. Note that we are using a multiset here, which is an extension of a set which can host single elements with a multiplicity higher than 1. This is necessary as two distinct agents might submit the same proposal and we want to be able to account for that.

Suppose there exists a single proposal that is the furthest from \(\theta_{\mathbf{v}_m}\). Then it always loses against any other kind of proposal, regardless of which other proposal it is put against. The agent who submitted it would strictly prefer to vote, since voting does not decrease the expected utility (and is chosen when indifferent). Thus, no such unique worst proposal can exist in equilibrium.

Therefore, there must be at least two worst proposals. Among these, at least two must be symmetric with respect to \(\theta_{\mathbf{v}_m}\). Indeed, if all the worst proposals coincide at the same value, then the agents making them have no influence on the outcome: their proposals always lose unless compared against each other, and the outcome does not depend on which identical proposal is selected. In such a case, deviating to voting would not decrease their expected utility, contradicting the assumption of equilibrium.

Hence, there must exist at least two distinct worst proposals that are symmetric around \(\theta_{\mathbf{v}_m}\). This symmetry is necessary because if the proposals were not equidistant from the median voter's type, the one closer to \(\theta_{\mathbf{v}_m}\) would win more often, and thus could not be equally bad. Therefore, among the worst proposals, we can always identify two that are distinct and symmetric with respect to the median voter's type.

\begin{center}
\begin{tikzpicture}
\draw[->] (0,0) -- (14,0);
\filldraw (2,0) circle (3pt) node[above]{$x_{\text{worst}_1}$};
\filldraw (7,0) circle (3pt) node[above]{$\theta_{\mathbf{v}_m}$};
\filldraw (12,0) circle (3pt) node[above]{$x_{\text{worst}_2}$};
\draw[decorate, decoration={brace, mirror, amplitude=10pt}, yshift=-5pt]
(2,0) -- node[below=12pt] {All proposals \(x_i\) lie in this interval.} (12,0);
\end{tikzpicture}
\end{center}

Let \(\text{worst}_1\) and \(\text{worst}_2\) be the agents submitting \(x_{\text{worst}_1}\) and \(x_{\text{worst}_2}\), respectively, with \(x_{\text{worst}_1} < x_{\text{worst}_2}\).

For this to be an equilibrium, it must be that:
\begin{align}
\theta_{\text{worst}_1} &\leq x_{\text{worst}_1}\leq \theta_{\mathbf{v}_m}, \label{condition_1} \\
\theta_{\text{worst}_2} &\geq x_{\text{worst}_2}\geq \theta_{\mathbf{v}_m}. \label{condition_2}
\end{align}

Conditions \eqref{condition_1} and \eqref{condition_2} are a direct consequence of Lemma \ref{lem:interval}.
Thus, we must have:
\[
\theta_{\text{worst}_1} \leq x_{\text{worst}_1} < \theta_{\mathbf{v}_m} < x_{\text{worst}_2} \leq \theta_{\text{worst}_2}.
\]

The degenerate case \(x_{\text{worst}_1} = \theta_{\mathbf{v}_m} = x_{\text{worst}_2}\) contradicts the earlier finding that the two worst proposals must be distinct.

These inequalities also imply that no equilibrium can exist where the first or last \((k+1)/2\) agents are all voters, since in that case either \eqref{condition_1} or \eqref{condition_2} would be violated.

\begin{center}
\begin{tikzpicture}
\draw[->] (0,0) -- (14,0);
\filldraw (0.5,0) circle (3pt) node[above]{$\theta_{\text{worst}_1}$};
\filldraw (2,0) circle (3pt) node[above]{$x_{\text{worst}_1}$};
\filldraw (7,0) circle (3pt) node[above]{$\theta_{\mathbf{v}_m}$};
\filldraw (12,0) circle (3pt) node[above]{$x_{\text{worst}_2}$};
\filldraw (13.5,0) circle (3pt) node[above]{$\theta_{\text{worst}_2}$};

\end{tikzpicture}
\end{center}

We now show that this scenario cannot represent an equilibrium. Specifically, agent \(\text{worst}_1\) has an incentive to slightly adjust their proposal to \(x_{\text{worst}_1} + \varepsilon\), for a sufficiently small \(\varepsilon > 0\). Suppose that \(\varepsilon\) is chosen such that the new proposal lies between \(x_{\text{worst}_1}\) and \(x_{\text{worst}_2}\), i.e., it is better than \(x_{\text{worst}_2}\) but still worse than all other proposals against which it was already losing.

Under this adjustment, whenever the new proposal is compared against policies other than the set of worst ones, it still loses. Hence, the expected utility from those pairs remains unchanged for agent \(\text{worst}_1\). However, when the adjusted proposal is compared against \(x_{\text{worst}_2}\) and all other worst proposals, it now wins. This new policy yields a strictly higher utility than the random outcome previously resulting from the tie between \(x_{\text{worst}_1}\) and \(x_{\text{worst}_2}\).

Formally, we have:
\[
-(\theta_{\text{worst}_1} - (x_{\text{worst}_1} + \varepsilon))^2 >
-\frac{(\theta_{\text{worst}_1} - x_{\text{worst}_1})^2 + (\theta_{\text{worst}_1} - x_{\text{worst}_2})^2}{2}.
\]
Let \( a = \theta_{\text{worst}_1} - x_{\text{worst}_1} \) and \( b = \theta_{\text{worst}_1} - x_{\text{worst}_2} \). Then the inequality becomes:
\[
-(a - \varepsilon)^2 > -\frac{a^2 + b^2}{2} \quad \Leftrightarrow \quad \varepsilon(\varepsilon - 2a) < \frac{b^2 - a^2}{2}.
\]
Since \(|a| < |b|\), it follows that \(b^2 - a^2 > 0\), and the inequality holds for \(\varepsilon > 0\) and small enough. 

This establishes that agent \(\text{worst}_1\) has a profitable deviation, contradicting the assumption of equilibrium.

Therefore, in any configuration where \(N-k\) agents are proposing, there exists at least one agent with an incentive to deviate. This undermines the stability of the configuration and leads to a contradiction. We conclude that no such equilibrium can exist, completing the proof.

\end{proof}

We now provide the proof for the case in which $k$ is even.
\begin{proposition}\label{proposition:no_eq_even_voting}
Let \(N\) be an odd integer, and let \(k\) be an even integer such that $0 < k < N - 1$. Then, there does not exist an equilibrium in which exactly $N - k$ agents are proposing.
\end{proposition}

We exclude the case $k = 0$, as it has already been proven separately: in that case, we showed that no equilibrium exists with all agents proposing. We also exclude the case $k = N - 1$, as Theorem \ref{theorem:vop-theorem} proves the existence (and unique characterization) of an equilibrium for that configuration.

\begin{proof}
We proceed by contradiction. Suppose there exists an equilibrium in which exactly $N - k$ agents are proposers, with $k$ even and $0 < k < N - 1$.

Let $\mathcal{CP}$ denote the set of proposers and $\mathcal{CV}$ the set of voters, so that $|\mathcal{CV}| = k$ is even. Because $k$ is even, there is no unique median voter. Denote the left and right medians by $m_l$ and $m_r$, respectively, and their types by $\theta_{m_l}$ and $\theta_{m_r}$.

Let $\mathcal{P} = \{x_p : p \in \mathcal{CP}\}$ be the multiset of proposals (recall the concept of a multiset we introduced in the proof of Proposition \ref{proposition:no_eq_odd_voting}). Define:
\[
x_{\min} = \min \mathcal{P}, \quad x_{\max} = \max \mathcal{P}
\]
Clearly, \(x_{\min} \leq x_{\max}\). Moreover, we cannot have $x_{\min} = x_{\max}$: since $k < N - 1$, there are at least three proposers, so if all proposed the same value, every agent would be indifferent between proposing and voting. This would create incentives to deviate, contradicting equilibrium. Hence, \(x_{\min} < x_{\max}\).

Now consider any two proposals $x_1, x_2 \in \mathcal{P}$. If $x_1$ is strictly farther from both $\theta_{m_l}$ and $\theta_{m_r}$ than $x_2$ is, then $x_1$ will lose a pairwise vote against $x_2$, since $x_2$ would strictly win more than half of the voters.

Therefore, at equilibrium, no proposal can lose against every other one in pairwise comparison—otherwise, the corresponding proposer would strictly prefer to deviate to voting, as their proposal is never implemented and voting can only increase utility.

Similarly, two agents cannot propose identical, losing policies unless those proposals win at least one pairwise comparison. Otherwise, one of them could switch to being a voter without changing the outcome but potentially improving their utility—again contradicting equilibrium.

Thus, every proposal must win at least half the votes against some other proposal. Since $|\mathcal{P}| \geq 3$, there exists a well-defined median proposal, say $x_m$, which is unique because $|\mathcal{P}|$ is odd.

We now rule out the configurations where proposals are too extreme relative to the voters' types:
\[
x_{\min} < x_{\max} < \theta_{m_l} \quad \text{or} \quad \theta_{m_r} < x_{\min} < x_{\max}.
\]
In the first case, $x_{\min}$ would lose against every other proposal (except possibly itself), contradicting the requirement that each proposal wins at least half the votes against some other proposal. The second case leads to a similar contradiction by considering $x_{\max}$.

It remains to analyze the following four configurations:
\begin{enumerate}[label=(\roman*)]
    \item $x_{\min} < \theta_{m_l} < x_{\max} < \theta_{m_r}$
    \item $x_{\min} < \theta_{m_l} < \theta_{m_r} < x_{\max}$
    \item $\theta_{m_l} \leq x_{\min} < x_{\max} \leq \theta_{m_r}$
    \item $\theta_{m_l} < x_{\min} < \theta_{m_r} < x_{\max}$
\end{enumerate}

Case (i): $x_{\min} < \theta_{m_l} < x_{\max} < \theta_{m_r}$: 
We must have $|x_{\max} - \theta_{m_l}| > |x_{\min} - \theta_{m_l}|$; otherwise, $x_{\min}$ becomes the worst proposal and contradicts equilibrium.

Moreover, this configuration is impossible if $|\theta_{m_r} - \theta_{m_l}| < |x_{\min} - \theta_{m_l}|$, as both inequalities cannot hold simultaneously.

Let $\theta_{\min}$ and $\theta_{\max}$ be the types of agents proposing $x_{\min}$ and $x_{\max}$, respectively. We argue that $\theta_{\min} \leq x_{\min}$ (the proof for $\theta_{\max} \geq x_{\max}$ is analogous). If $\theta_{\min} > x_{\min}$, the agent could profitably deviate by increasing their proposal closer to both medians and to their own type, raising both the chance of winning and realized utility.

Assuming without loss of generality that $|x_{\min} - x_m| \leq |x_{\max} - x_m|$, the agent proposing $x_{\min}$ would have a non-decreasing deviation to voting: doing so would make $\theta_{m_l}$ the unique median voter, shifting outcomes toward policies closer to $\theta_{\min}$. Thus, utility increases.

This argument also rules out case (iv) in a symmetric manner.

Case (ii): $x_{\min} < \theta_{m_l} < \theta_{m_r} < x_{\max}$: 
At equilibrium, we must simultaneously have:
\[
|x_{\min} - \theta_{m_l}| \leq |x_{\max} - \theta_{m_l}|, \quad 
|x_{\max} - \theta_{m_r}| \leq |x_{\min} - \theta_{m_r}|,
\]
otherwise a unique worst proposal exists, contradicting equilibrium.

As before, $\theta_{\min} \leq x_{\min}$, since a deviation to increase the proposal would be profitable if $\theta_{\min} > x_{\min}$. Assuming again $|x_{\min} - x_m| \leq |x_{\max} - x_m|$, the proposer of $x_{\min}$ benefits from switching to voting, since $\theta_{m_l}$ becomes the median and the resulting outcomes favor $\theta_{\min}$.

Case (iii): $\theta_{m_l} \leq x_{\min} < x_{\max} \leq \theta_{m_r}$: 
The same reasoning applies: if $\theta_{\min} > x_{\min}$, the agent making the proposal would have a profitable deviation to increase the proposal. Assuming, w.l.o.g. $|x_{\min} - x_m| \leq |x_{\max} - x_m|$, the proposer $\theta_{min}$ again benefits from voting, as the decisive voter becomes $\theta_{m_l}$, improving the expected utility of the agent with type $\theta_{min}$, by having all outcomes closer to his type than before.

In each of the above configurations, a profitable deviation arises for at least one proposer, contradicting the assumption of equilibrium. Thus, no equilibrium with $N - k$ proposers can exist when $k$ is even and $0 < k < N - 1$.
\end{proof}

Now we can finally prove the uniqueness of the equilibrium characterized in Theorem \ref{theorem:vop-theorem} for the case of an odd number of agents.

\begin{proof}[Proof of Theorem \ref{theorem:Uniqueness_of_N_odd}]
    The result follows by combining Proposition \ref{proposition:no_eq_N-1_Proposing}, Proposition \ref{proposition:no_eq_odd_voting}, Proposition \ref{proposition:no_eq_even_voting}, and Theorem \ref{theorem:vop-theorem}.
\end{proof}

\subsection{Example of election for $N=3$ agents}

\subsubsection*{Case 1: Zero Proposers}

If no agent opts to make a proposal, an alternative is selected uniformly at random from the agents' types \(\{\theta_1, \theta_2, \theta_3\}\). The expected utility for each agent \(i\) is:

\[
\mathbb{E}(u_{\theta_i, v}) = -\frac{1}{3} \left( (\theta_i - \theta_1)^2 + (\theta_i - \theta_2)^2 + (\theta_i - \theta_3)^2 \right).
\]

\begin{table}[h]
\centering
\renewcommand{\arraystretch}{1.25} % Adjust the factor as needed
\begin{tabular}{|c|c|}
\hline
Agent & Expected Utility \\
\hline
1 & \(-\frac{ (\theta_1 - \theta_2)^2 + (\theta_1 - \theta_3)^2}{3}\) \\
2 & \(-\frac{(\theta_2 - \theta_1)^2 + (\theta_2 - \theta_3)^2}{3}\) \\
3 & \(-\frac{(\theta_3 - \theta_1)^2 + (\theta_3 - \theta_2)^2}{3}\) \\
\hline
\end{tabular}
\caption{Expected utilities when no agent proposes.}
\label{tab:zero_proposers}
\end{table}

In this scenario, the median agent is motivated to deviate and propose himself, as this proposal would result in a utility of \(0\), which is an improvement over the expected utility from random selection. For completeness, it is worth noting that in this strategic profile, every agent has an incentive to self-propose. By being the sole proposer, each agent could achieve a utility of \(0\).

\subsubsection*{Case 2: One Proposer}

Suppose that only one agent opts to make a proposal. The proposer suggests a candidate, and the other two agents vote. The proposer will propose the candidate that maximizes their utility, and hence himself. We repeat Table \ref{tab:one_proposer} from Subsection \ref{IllustrationN3Agents}.

\begin{table}[h]
\centering
\renewcommand{\arraystretch}{1.5}
\begin{tabular}{|c|c|c|c|}
\hline
Proposer & Utility for Agent 1 & Utility for Agent 2 & Utility for Agent 3 \\
\hline
Agent 1 & 0 & \(-(\theta_2 - \theta_1)^2\) & \(-(\theta_3 - \theta_1)^2\) \\
\hline
Agent 2 & \(-(\theta_1 - \theta_2)^2\) & 0 & \(-(\theta_3 - \theta_2)^2\) \\
\hline
Agent 3 & \(-(\theta_1 - \theta_3)^2\) & \(-(\theta_2 - \theta_3)^2\) & 0 \\
\hline
\rowcolor{gray!30}
Equilibrium & No & Yes & No \\
\hline
\end{tabular}
\caption*{Table \ref{tab:one_proposer}: Utilities realized by each agent when only one agent proposes.}
\end{table}
\begin{itemize}
    \item If agent 1 proposes himself, agent 2 has an incentive to propose himself, which would yield to him a higher utility of \(0\). Indeed, if agent 2 proposes, the proposals would be put to a vote, and since agent 3 would choose the proposal closer to his type, agent 2 would win the election since \(\theta_2\) is closer to \(\theta_3\) than \(\theta_1\).

    \item If agent 2 proposes himself, neither agent 1 nor agent 3 has an incentive to deviate. Even if they were to propose, they would not achieve a better outcome than by voting. Specifically, any proposal they make would either result in the same utility or a lower one compared to voting. Given the condition that agents choose to vote if indifferent between voting and proposing, this implies they have no incentive to propose.

    \item If agent 3 proposes \(\theta_3\), agent 2 has an incentive to propose \(\theta_2\), which would yield a higher utility of \(0\).
\end{itemize}

Thus, the only stable scenario is when the median agent (agent 2) proposes himself. This aligns exactly with the equilibrium characterized by Theorem \ref{theorem:vop-theorem}.

\subsubsection*{Case 3: Two Proposers}

We will demonstrate case by case that there is no equilibrium when two agents are proposing. We consider every possible configuration explicitly. Specifically, we examine the following pairs of agents: \((1, 2)\), \((1, 3)\), and \((2, 3)\).

\paragraph{Agents 1 and 2 Proposing}

\begin{itemize}
    \item If agent 1 proposes himself, agent 2, knowing this, would propose himself as he would win the election and achieve maximum utility. However, knowing that agent 2 is proposing himself, agent 1 would then vote, as any proposal he could make would result in lower utility. This shows that no equilibrium exists where agent 1 proposes himself.
    \item If agent 1 proposes agent 2, agent 2 would simply vote, as he is indifferent between voting and proposing. Hence, there is no sustainable equilibrium with agent 1 proposing agent 2.
    \item If agent 1 proposes agent 3, agent 2 would again vote, as he would lose the vote regardless and is indifferent between any proposal he could make and voting.
\end{itemize}

The following table represents all the utilities realized when agents 1 and 2 propose. The blue cells represent the actions that agent 1 can choose, while the orange columns represent the actions available to agent 2. In all white cells, the first entry is the utility of agent 1 in that profile, and the second is the utility of agent 2. We use the following notation:

\[
u_a = -(\theta_1 - \theta_2)^2, \quad u_b = -(\theta_2 - \theta_3)^2, \quad u_c = -(\theta_3 - \theta_1)^2, \quad u_r = \frac{u_a + u_b + u_c}{3}
\]

\begin{table}[h]
\centering
\renewcommand{\arraystretch}{1.5}
\definecolor{lightblue}{RGB}{200, 221, 242}
\definecolor{lightorange}{RGB}{255, 230, 204}

\begin{tabular}{|c|c|c|c|c|}
\hline
Proposals & \cellcolor{lightblue}Agent 1 & \cellcolor{lightblue}Agent 2 & \cellcolor{lightblue}Agent 3 & \cellcolor{lightblue}Vote \\
\hline
\cellcolor{lightorange}Agent 1 & \((0, u_a)\) & \((u_a, 0)\) & \((u_c, u_b)\) & \((0, u_a)\) \\
\hline
\cellcolor{lightorange}Agent 2 & \((u_a, 0)\) & \((u_a, 0)\) & \((u_c, u_b)\) & \((u_a, 0)\) \\
\hline
\cellcolor{lightorange}Agent 3 & \((u_c, u_b)\) & \((u_c, u_b)\) & \((u_c, u_b)\) & \((u_c, u_b)\) \\
\hline
\cellcolor{lightorange}Vote & \((0, u_a)\) & \((u_a, 0)\) & \((u_c, u_b)\) & \((u_r, u_r)\) \\
\hline
\end{tabular}
\caption{Outcomes for every possible combination of proposals when Agent 1 and Agent 2 are proposing.}
\label{tab:2_proposer}
\end{table}

Notice that proposing agent 2 is a dominant strategy for agent 2. Consequently, agent 1 will opt to vote when agent 2 proposes himself, as agent 1 is indifferent between voting and proposing either agent 1 or agent 2. This observation provides another perspective on why no equilibrium with both proposing exists in this scenario.

\paragraph{Agents 2 and 3 Proposing}

The same discussion applies to the scenario where agents 2 and 3 are proposing, leading to the conclusion that no equilibrium exists with these two agents proposing.

\paragraph{Agents 1 and 3 Proposing}

Assume first, without loss of generality, that agent 1 is closer to agent 2 than agent 3.

\begin{center}
\begin{tikzpicture}

% Draw the line
\draw[->] (0,0) -- (10,0);

% Draw the agents
\filldraw (2,0) circle (3pt) node[above]{Agent 1};
\filldraw (4,0) circle (3pt) node[above]{Agent 2};
\filldraw (9,0) circle (3pt) node[above]{Agent 3};

\end{tikzpicture}
\end{center}

\begin{table}[h]
\centering
\renewcommand{\arraystretch}{1.5}
\definecolor{lightblue}{RGB}{200, 221, 242}
\definecolor{lightorange}{RGB}{255, 230, 204}

\begin{tabular}{|c|c|c|c|c|}
\hline
Proposals & \cellcolor{lightblue}Agent 1 & \cellcolor{lightblue}Agent 2 & \cellcolor{lightblue}Agent 3 & \cellcolor{lightblue}Vote \\
\hline
\cellcolor{lightorange}Agent 1 & \((0, u_c)\) & \((u_a, u_b)\) & \((0, u_c)\) & \((0, u_c)\) \\
\hline
\cellcolor{lightorange}Agent 2 & \((u_a, u_b)\) & \((u_a, u_b)\) & \((u_a, u_b)\) & \((u_a, u_b)\) \\
\hline
\cellcolor{lightorange}Agent 3 & \((0, u_c)\) & \((u_a, u_b)\) & \((u_c, 0)\) & \((u_c, 0)\) \\
\hline
\cellcolor{lightorange}Vote & \((0, u_c)\) & \((u_a, u_b)\) & \((u_c, 0)\) & \((u_r, u_r)\) \\
\hline
\end{tabular}
\caption{Outcomes for every possible combination of proposals when Agent 1 and Agent 3 are proposing.}
\label{tab:2_proposerVers2}
\end{table}

Again, there is no equilibrium in this case, as agent 1 proposing himself is a dominant strategy. Hence, he will propose himself. Then, given that agent 1 proposes himself, agent 3 will propose agent 2 as this is the best strategy for him. Given that agent 3 is proposing agent 2, agent 1 is indifferent between every action he can take and will vote due to the condition we impose. Again, there is no equilibrium with both agents proposing.

The case where agent 1 and agent 3 are equally distant from agent 2 can be resolved similarly, with the only difference being that agent 2 will randomly choose one of the two proposals. Actually, to be completely correct, we should say that agent 2 abstains, since he is indifferent between the two, and then one of the two is chosen uniformly at random. However, since both agents have an incentive to deviate and propose \(\theta_2\), this cannot be an equilibrium. This holds true because:

\[
\mathbb{E}[u_{\theta_i, \theta_2}] = -(\theta_i - \theta_2)^2 > -\frac{(\theta_3 - \theta_1)^2}{2} = \mathbb{E}[u_{\theta_i, \theta_i}] \iff 2(\theta_i - \theta_2)^2 < (\theta_3 - \theta_1)^2 = 4(\theta_i - \theta_2)^2.
\]
\begin{center}
\begin{tikzpicture}

% Draw the line
\draw[->] (0,0) -- (8,0);

% Draw the agents
\filldraw (2,0) circle (3pt) node[above]{Agent 1};
\filldraw (4,0) circle (3pt) node[above]{Agent 2};
\filldraw (6,0) circle (3pt) node[above]{Agent 3};

\end{tikzpicture}
\end{center}
We have examined each case step by step and through different explanations to demonstrate why there is no equilibrium with two agents proposing. The core conclusion is that in every situation, there is always an agent who is indifferent between voting and proposing and will thus end up voting.

\subsubsection*{Case 4: Three Proposers}

If all three agents opt to make a proposal, each agent proposes their own type. Then, by the edge case rules, one of the submitted proposals will be selected uniformly at random. The expected utility for each agent \(i\) is:

\[
\mathbb{E}(u_{\theta_i, \theta_i}) = -\frac{1}{3} \left( (\theta_i - \theta_1)^2 + (\theta_i - \theta_2)^2 + (\theta_i - \theta_3)^2 \right).
\]

\begin{table}[h]
\centering
\renewcommand{\arraystretch}{1.5}
\begin{tabular}{|c|c|}
\hline
Agent & Expected Utility \\
\hline
1 & \(-\frac{1}{3}\left( (\theta_1 - \theta_2)^2 + (\theta_1 - \theta_3)^2\right)\) \\
2 & \(-\frac{1}{3}\left((\theta_2 - \theta_1)^2 + (\theta_2 - \theta_3)^2\right)\) \\
3 & \(-\frac{1}{3}\left((\theta_3 - \theta_1)^2 + (\theta_3 - \theta_2)^2 \right)\) \\
\hline
\end{tabular}
\caption{Expected utilities when all agents propose.}
\label{tab:three_proposers}
\end{table}

Without loss of generality, assume that \(|\theta_1 - \theta_2| \leq |\theta_3 - \theta_2|\). In this case, agent 1 has an incentive to deviate and vote, as voting would yield a higher expected utility.

Indeed, if he is the only one voting, he will vote for 2, and since he is the only one voting, this candidate will be elected. We then have:

\[
-\frac{1}{3}\left((\theta_1 - \theta_2)^2 + (\theta_1 - \theta_3)^2\right) \leq -(\theta_1 - \theta_2)^2
\]

This inequality holds if and only if:

\[
2(\theta_1 - \theta_2)^2 \leq (\theta_1 - \theta_3)^2
\]

Given our assumption that \(|\theta_1 - \theta_2| \leq |\theta_3 - \theta_2|\), it follows that:

\[
2(\theta_1 - \theta_2)^2 \leq 4(\theta_1 - \theta_2)^2 \leq (\theta_1 - \theta_3)^2
\]

This confirms the result.

\subsubsection*{Conclusion}

The only equilibrium occurs when the median agent (agent 2) chooses to engage in proposal-making, proposes himself, and is subsequently elected by default. In all other scenarios, at least one agent has an incentive to deviate, thereby confirming the uniqueness of this equilibrium.

\end{document}